\newtheorem{theorem}{Theorem}
\newtheorem{corollary}{Corollary}
\newtheorem{example}{Example}
\newtheorem{definition}{Definition}
\newtheorem{lemma}{Lemma}
\title{Covering Arrays on Product  Graphs}
\author{Yasmeen Akhtar$~~~~$ Soumen Maity\\
Indian Institute of Science Education and Research\\
Pune, India\\}
\date{}
\begin{document}
\maketitle

\begin{abstract}Two vectors $x,y$ in $\mathbb{Z}_g^n$ are {\it qualitatively independent} if for all pairs $(a,b)\in \mathbb{Z}_g\times \mathbb{Z}_g$, 
there exists $i\in \{1,2,\ldots,n\}$ such that $(x_i,y_i)=(a,b)$. A covering array on a graph $G$, denoted by $CA(n,G,g)$, is a $|V(G)|\times n$ array on $\mathbb{Z}_g$ 
with the property that any two rows which correspond to adjacent vertices in $G$ are qualitatively independent. The number of columns in such array is called its {\it size}. 
Given a graph $G$, a covering array on $G$ with minimum size is called {\it optimal}. Our primary concern in this paper is with constructions that make optimal covering arrays on large 
graphs those are obtained from product of smaller graphs. We consider four  most extensively studied graph products in literature and give upper and lower bounds
on the the size of covering arrays on graph products.  We find families of  graphs for which the  size of covering array on the Cartesian product  achieves the lower bound.   Finally, we present a polynomial time 
approximation algorithm with approximation ratio $\log(\frac{V}{2^{k-1}})$ for constructing covering array on 
graph $G=(V,E)$ with $k>1$ prime factors with respect to the Cartesian product.

\end{abstract}
\section{Introduction}

A covering array $CA(n,k,g)$ is a $k\times n$ array on $\mathbb{Z}_g$ with the property that any two rows are qualitatively independent. The number $n$ of columns 
in such array is called its size. The smallest possible size of a covering array is denoted 
\begin{equation*}
CAN(k,g)=\min_{n\in \mathbb{N}}\{n~:~ \mbox{there exists a } CA(n,k,g)\}
\end{equation*}
Covering arrays are generalisations of both orthogonal arrays and Sperner systems. Bounds and constructions of covering arrays have been derived from algebra, design theory, graph theory, set systems
and intersecting codes \cite{chatea, kleitman, sloane, stevens1}. Covering arrays  have industrial applications in many disparate  applications  in which factors or components interact, for example, software and circuit testing, switching networks, drug screening and data compression \cite{korner,ser,Cohen}.  In \cite{karen}, the definition of a covering array has been extended to include a graph structure. 

\begin{definition}\rm  (Covering arrays on graph).   A covering array on a graph $G$ with alphabet size $g$ and $k=|V(G)|$ is a $k\times n$ array on $\mathbb{Z}_g$. 
Each row in the array corresponds to a vertex in the graph $G$. The  array has the property that any two rows which correspond to adjacent vertices in $G$ are qualitatively independent. 
\end{definition}

\noindent A covering array on a graph $G$ will be denoted by $CA(n,G,g)$. The smallest possible covering array on a graph $G$ will be denoted
\begin{equation*}
CAN(G,g)=\min_{n\in \mathbb{N}}\{n~:~ \mbox{there exists a } CA(n,G,g)\}
\end{equation*}
Given a graph $G$ and a positive integer $g$, a covering array on $G$ with minimum size is called {\it optimal}.  Seroussi and Bshouly proved that determining the existence of an optimal binary 
covering array on a graph is an NP-complete problem \cite{ser}. We start with a review of some definitions and results from product graphs in Section \ref{productgraph}.  In Section \ref{bound}, 
we show that for all graphs $G_1$ and $G_2$, 
$$\max_{i=1,2}\{CAN(G_i,g)\}\leq CAN(G_1\Box G_2,g)\leq CAN( \max_{i=1,2}\{\chi(G_i)\},g).$$ We look for graphs $G_1$ and $G_2$ where the lower bound  on $CAN(G_1\Box G_2)$ is
achieved.   In Section \ref{Cayley}, we give families of Cayley graphs that achieves this lower bound on covering array number on graph product. In Section \ref{Approx}, we present a polynomial time 
approximation algorithm with approximation ratio $\log(\frac{V}{2^{k-1}})$ for constructing covering array on 
graph $G=(V,E)$ having more than one prime factor with respect to the Cartesian product.

\section{Preliminaries} \label{productgraph}
In this section, we give several definitions from product graphs that we use in this article. 
A graph product is a binary operation on the set of all finite graphs. However among all possible associative graph products 
the most extensively studied in literature are  the Cartesian product, the direct product,
 the strong product and  the lexicographic product. 

\begin{definition}\rm
 The Cartesian product of graphs $G$ and $H$, denoted by $G\Box H$, is the graph with 
 \begin{center}
  $V(G\Box H) = \{(g, h) \lvert g\in V(G) \mbox{ and } h \in V(H)\}$,
  \\ $E(G\Box H) = \{ (g, h)(g', h') \lvert g = g', hh' \in E(H), \mbox{ or }  gg' \in E(G), h=h' \}$.
 \end{center}
The graphs $G$ and $H$ are called the {\it factors}  of the product  $G \Box H$.
\end{definition}
\noindent In general, given  graphs $G_1,G_2,...,G_k$, then $G_1 \Box G_2 \Box \cdots \Box G_k$, is the graph with vertex set
$V(G_1) \times V(G_2) \times \cdots \times V(G_k) $, and two vertices $(x_1,x_2,\ldots, x_k)$ and
$(y_1, y_2,\ldots,y_k)$ are adjacent if and only if $x_iy_i \in E(G_i)$ for exactly one index  $1\leq i\leq k$ and $x_j = y_j$ for each index $j \not= i$.\\

\begin{definition}\rm
The direct product of graphs $G_1,G_2,...,G_k$, denoted by $G_1\times G_2\times \cdots \times G_k$, is the graph with vertex 
set $V(G_1) \times V(G_2) \times \cdots \times V(G_k) $, and for which vertices $(x_1,x_2,...,x_k)$ and $(y_1,y_2,...,y_k)$ are 
adjacent precisely if  $x_iy_i \in E(G_i)$ for each index $i$. 
\end{definition}

\begin{definition}\rm
The strong product of graphs $G_1,G_2,...,G_k$, denoted by $G_1\boxtimes G_2\boxtimes \cdots \boxtimes G_k$, is the graph with vertex set 
$V(G_1) \times V(G_2) \times \cdots \times V(G_k) $, and distinct vertices $(x_1,x_2,\ldots,x_k)$ and $(y_1,y_2,\ldots,y_k)$  are adjacent if and only if 
either $x_iy_i\in E(G_i)$ or $x_i=y_i$ for each $1\leq i\leq k$. We note that in general  $E(\boxtimes_{i=1}^k {G_i}) \neq E(\Box_{i=1}^k G_i) \cup E(\times_{i=1}^k G_i)$, unless $k=2$.
\end{definition}

\begin{definition}\rm
 The lexicographic product of graphs $G_1,G_2,...,G_k$, denoted by $G_1\circ G_2\circ \cdots \circ G_k$, is the graph with 
 vertex set  $V(G_1) \times V(G_2) \times \cdots \times V(G_k) $, and two vertices $(x_1,x_2,...,x_k)$ and $(y_1,y_2,...,y_k)$ are 
adjacent if and only if for some index $j\in \{1,2,...,k\}$ we have $x_jy_j \in E(G_j)$ and $x_i =y_i$ for each index $1\leq i < j$. 
\end{definition}

Let $G$ and $H$ be graphs with vertex sets $V(G)$ and $V(H)$, respectively. A {\it homomorphism} from $G$ to $H$ is  a map 
$\varphi~:~V(G)\rightarrow V(H)$ that preserves adjacency:  if $uv$ is an edge in $G$, then $\varphi(u)\varphi(v)$ is an edge in $H$. 
We say $G\rightarrow H$ if there is a homomorphism from $G$ to $H$,  and $G \equiv H$  if $G\rightarrow H$ and $H\rightarrow G$. 
A {\it weak homomorphism} from $G$ to $H$ is a map  $\varphi~:~V(G)\rightarrow V(H)$ such that if $uv$ is an edge in $G$, then either 
$\varphi(u)\varphi(v)$ is an edge in $H$, or $\varphi(u)=\varphi(v)$. Clearly every homomorphism is automatically a weak homomorphism.

Let $\ast$  represent either the  Cartesian, the direct or the strong product of graphs, and consider a product $G_1\ast G_2\ast \ldots\ast G_k$. 
For any index $i$, $1\leq i\leq k$, a {\it projection map} is defined as:
$$p_i~:~G_1\ast G_2\ast \ldots\ast G_k \rightarrow G_i ~\mbox{where} ~p_i(x_1,x_2,\ldots,x_k)=x_i.$$  By the definition of the Cartesian, the direct, and the strong product of 
graphs, each $p_i$ is a weak homomorphism. In the case of direct product, as $(x_1,x_2,\ldots,x_k)(y_1,y_2,\ldots,y_k)$ is an an edge of $G_1\times G_2 \times,\ldots,\times G_k$ 
if and only if $x_iy_i\in E(G_i)$ for each $1\leq i\leq k$., each projection $p_i$ is actually a homomorphism. In the case of lexicographic product, the first projection map that is projection on first component is a weak homomorphism, where as in general the projections to the other 
components are not weak homomorphisms. \\

A graph is {\it prime} with respect to a given graph product if it is nontrivial and cannot be represented as the product of two nontrivial 
graphs. For the Cartesian product,
it means that a nontrivial graph $G$ is prime if $G=G_1\Box G_2$ implies that either $G_1$ or $G_2$ is $K_1$. Similar observation is 
true for other three products. The uniqueness of the prime factor decomposition of connected graphs with respect to the
 Cartesian product was first shown by Subidussi $(1960)$, and independently by Vizing $(1963)$. Prime factorization is not unique 
 for the Cartesian product in the class of possibly disconnected simple graphs \cite{HBGP}. It is known that any connected graph factors 
 uniquely into prime graphs with respect to the Cartesian product. 
 
 \begin{theorem}(Sabidussi-Vizing)
Every connected graph has a unique representation as a product of prime graphs, up to isomorphism and the order of the factors. The number of prime factors is 
at most  $\log_2 {V}$.
 \end{theorem}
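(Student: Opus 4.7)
The plan is to split the theorem into three independent parts: the existence of a prime factorization, the bound $k \le \log_2 |V(G)|$ on the number of factors, and the uniqueness of the factorization up to isomorphism and order.

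For existence I would induct on $|V(G)|$. If $G$ is prime there is nothing to show. Otherwise write $G = A \Box B$ with both $A$ and $B$ nontrivial; then $|V(A)|, |V(B)| < |V(G)|$, so the inductive hypothesis supplies prime factorizations of $A$ and $B$, and concatenating them (using associativity of $\Box$) yields one for $G$. Note that connectedness is not needed for existence. The bound is then an immediate corollary: each prime factor $G_i$ is nontrivial and so has $|V(G_i)| \ge 2$, hence $|V(G)| = \prod_{i=1}^k |V(G_i)| \ge 2^k$, giving $k \le \log_2 |V(G)|$.

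Uniqueness is the substantive part, and this is where connectedness is essential. The approach I would take is to recover the prime factors intrinsically from $G$, without referring to any particular factorization. The standard tool is the Djokovi\'c--Winkler relation $\Theta$ on $E(G)$: two edges are declared $\Theta$-related when they span opposite sides of a 4-cycle reflecting a Cartesian ``square,'' together with a distance condition that distinguishes such 4-cycles from incidental ones. Let $\Theta^*$ denote its transitive closure. The heart of the argument is the claim that in any prime factorization $G = G_1 \Box \cdots \Box G_k$ the equivalence classes of $\Theta^*$ coincide exactly with the edge sets of the $G_i$-layers. Given this, each factor $G_i$ is recovered as a quotient of $G$ determined by its $\Theta^*$-class, and since the classes are defined intrinsically from $G$ alone the factors are determined up to isomorphism and ordering.

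The main obstacle is establishing this coincidence of $\Theta^*$-classes with layers. One direction, that edges in the same layer are $\Theta^*$-related, is comparatively easy from the abundant square structure of $A \Box B$ and an induction along paths in $G$ (here connectedness enters to propagate local squares to a global equivalence). The reverse direction, that $\Theta$-related edges cannot cross different layers of any prime factorization, is the delicate step: it requires a square-completion or ``unique square'' lemma showing that, for any edge originating in one layer and any edge originating in another, the 4-cycle they force through a common vertex is uniquely determined and respects the factorization. This unique-square analysis, together with reconciling two \emph{different} hypothetical prime factorizations via their common $\Theta^*$-structure, is where the proof genuinely works and where I expect the main technical effort to lie.
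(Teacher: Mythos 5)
The paper does not actually prove this statement; it is quoted from Sabidussi and Vizing with a pointer to the Handbook of Product Graphs \cite{HBGP}, so your proposal has to stand on its own. Your existence argument (induction on $|V|$) and the bound $k\le\log_2 V$ (each nontrivial factor contributes a factor of at least $2$ to $|V|=\prod_i |V(G_i)|$) are correct. The uniqueness argument, however, rests on a claim that is false: the equivalence classes of $\Theta^*$ do \emph{not} in general coincide with the edge sets of the layers of the prime factorization. Any tree with at least two edges is a counterexample: if $e=ab$ and $f=xy$ are distinct edges of a tree and, say, both $x$ and $y$ lie in the component of $T-e$ containing $b$, then $d(a,x)+d(b,y)=1+d(b,x)+d(b,y)=d(a,y)+d(b,x)$, so no two distinct edges of a tree are $\Theta$-related and every $\Theta^*$-class is a single edge --- yet every tree is prime with respect to $\Box$ (a nontrivial Cartesian product of connected graphs contains a $4$-cycle), so its unique layer is all of $E$. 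Note that the direction you call ``comparatively easy'' (edges in a common layer are $\Theta^*$-related) is precisely the direction that fails; the direction you call delicate (that $\Theta$ never relates edges from different layers) is the true and comparatively routine containment $\Theta\subseteq c$, which follows because distances in a Cartesian product add coordinatewise.

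What $\Theta^*$ actually produces is the Graham--Winkler canonical \emph{isometric embedding} of $G$ into a Cartesian product, not its prime factorization. To make your program work you must coarsen $\Theta^*$ to the finest genuine product relation containing it. The route sketched in the supplementary discussion accompanying this paper (following \cite{HBGP}) adjoins the relation $\tau$ --- $uv\,\tau\,uw$ whenever $vw\notin E(G)$ and $u$ is the unique common neighbour of $v$ and $w$, which forces such edge pairs into the same factor --- and shows that $\sigma=(\Theta\cup\tau)^*$ has the unique-square property and is exactly the product relation of the prime factorization; uniqueness then follows because $\sigma$ is defined intrinsically from $G$. Without this coarsening step (or an equivalent device, such as Sabidussi's original argument via convex subgraphs), your uniqueness proof does not go through.
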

\noindent For any connected graph $G=(V,E)$, the prime factors of $G$  with respect to the Cartesian product can be computed in $O(E \log V) $ times and $O(E)$ space. See Chapter 23,  \cite{HBGP}.

\section{Graph products and covering arrays}\label{bound}
Let $\ast$ represent either the Cartesian, the direct, the strong, or the lexicographic product operation. 
Given covering arrays $CA(n_1,G_1,g)$ and $CA(n_2,G_2,g)$, one can construct covering array on $G_1 \ast G_2$ as follows:  the  row corresponds
 to the vertex $(a,b)$ is obtained by horizontally concatenating the row corresponds to the vertex $a$ in $CA(n_1,G_1,g)$  with the row
 corresponds to the vertex $b$ in $CA(n_2,G_2,g)$. Hence an obvious upper bound for the covering array number is given by
 \begin{center}
  $CAN(G_1 \ast G_2, g) \leq CAN(G_1, g) + CAN(G_2, g) $
 \end{center}
  We now propose some  improvements of this bound. A column of a covering array is {\it constant} if,  for some symbol $v$, every entry in the 
  column is $v$. In a {\it standardized } $CA(n,G,g)$ the first column is constant. Because symbols within each row can be permuted independently, 
  if a $CA(n,G,g)$ exists, then a standardized $CA(n,G,g)$ exists. 
\begin{theorem}
 Let $G=G_1\boxtimes G_2\boxtimes \cdots \boxtimes G_k$, $k\geq 2$ and $g$ be a positive integer. 
 Suppose for each $1\leq i\leq k$  there exists a $CA(n_i,G_i,g)$, then there exists a 
 $CA(n,G,g)$ where $n=\underset{i=1}{\overset{k}\sum} n_i -k$. Hence,
 $CAN(G,g)\leq \underset{i=1}{\overset{k}\sum} CAN(G_i,g)-k$.
 
\end{theorem}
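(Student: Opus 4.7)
The plan is to construct a $CA(n, G, g)$ with $n = \sum_{i=1}^{k} n_i - k$ on $G = G_1 \boxtimes G_2 \boxtimes \cdots \boxtimes G_k$ by horizontally concatenating standardized copies of the given CAs and then pruning $k$ redundant constant columns.

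I would first standardize each $CA(n_i, G_i, g)$ so that its first column is the all-zero constant column; by the remark preceding the statement, permuting the $g$ symbols independently within each row preserves qualitative independence of every adjacent row pair, so this rearrangement is free of charge. Denote the standardized arrays by $A_1, \ldots, A_k$, and form the concatenated array $A$ on $V(G_1) \times \cdots \times V(G_k)$ whose row indexed by $(x_1, \ldots, x_k)$ is $A_1(x_1) \parallel A_2(x_2) \parallel \cdots \parallel A_k(x_k)$; this has $\sum n_i$ columns.

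Next I would verify that $A$ is a covering array on $G$. Given an edge of the strong product between $u = (x_1, \ldots, x_k)$ and $v = (y_1, \ldots, y_k)$, the adjacency definition forces $x_i = y_i$ or $x_i y_i \in E(G_i)$ in each coordinate, and since $u \ne v$ at least one coordinate $i^{*}$ satisfies $x_{i^{*}} y_{i^{*}} \in E(G_{i^{*}})$. The rows $A_{i^{*}}(x_{i^{*}})$ and $A_{i^{*}}(y_{i^{*}})$ are qualitatively independent because $A_{i^{*}}$ is a covering array on $G_{i^{*}}$, and this property is inherited by the full concatenated rows of $A$.

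Finally I would delete the $k$ identical all-zero first columns, one per subarray, obtaining a candidate array of size $\sum n_i - k$. All pairs $(a, b) \ne (0, 0)$ are still supplied by non-first columns of $A_{i^{*}}$, so only the pair $(0, 0)$ needs re-checking. If some coordinate $j$ satisfies $x_j = y_j$, then the row $A_j(x_j)$ is repeated in both concatenated rows at $u$ and $v$, and it must carry the value $0$ in at least one non-first column: for any neighbor $w$ of $x_j$ in $G_j$ the pair $(0, 1)$ must appear somewhere in $A_j$ between rows $A_j(x_j)$ and $A_j(w)$, and it cannot appear in the constant-zero first column, forcing a non-first column in which $A_j(x_j) = 0$. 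The main obstacle is the remaining case, in which every coordinate is a genuine edge of the factor graph (no repeated-coordinate is available); here I would argue that at least one subarray $A_{i^{*}}$ still supplies $(0, 0)$ for the rows $x_{i^{*}}, y_{i^{*}}$ in a non-first column. This is the delicate step and can be secured by a refinement of the standardization ensuring that for every edge $vw$ of each $G_i$, the pair $(0, 0)$ appears in rows $A_i(v), A_i(w)$ of $A_i$ in at least one column other than the first, so that $(0, 0)$ survives the deletion in every strong-product edge.
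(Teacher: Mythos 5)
Your construction is the same as the paper's (standardize each factor array to have a constant first column, concatenate, delete the $k$ constant columns), and you correctly isolate the one dangerous case: an edge of the strong product in which \emph{every} coordinate is an edge of its factor, so that no repeated row is available to supply the diagonal pair that the deleted columns carried. But your proposed repair of that case does not work. You standardize every $A_i$ on the \emph{same} symbol $0$ and then hope to refine the standardization so that, for every edge $vw$ of every $G_i$, the pair $(0,0)$ also occurs in a non-first column of $A_i$. This is impossible in general: take $G_1=G_2=K_3$, $g=2$, and let each $CA(4,K_3,2)$ be the orthogonal array $OA(3,2)$, in which every ordered pair occurs \emph{exactly once} between any two rows. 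Permuting symbols independently within rows preserves the ``exactly once'' property, so in any standardization whose first column is constant on $0$, the pair $(0,0)$ occurs \emph{only} in the first column for every pair of rows, and no refinement can change that. Concretely, with $A_1=A_2$ having rows $0011$, $0101$, $0110$, the vertices $(1,2)$ and $(2,3)$ of $K_3\boxtimes K_3$ are adjacent, and after deleting both first columns their concatenated rows are $(011\,|\,101)$ and $(101\,|\,110)$, which never exhibit $(0,0)$. The same obstruction arises whenever a factor array is optimal of size $g^2$, so the gap cannot be waved away.

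The paper's resolution is a small but essential twist you are missing: standardize the factors on \emph{different} constant symbols (the paper puts the first column of the $i$-th array constant on symbol $i$ for $i\le g$, and on symbol $1$ thereafter). Then deleting the first column of $A_i$ can only lose the single diagonal pair $(c_i,c_i)$ between two adjacent rows of $G_i$, and since $k\ge 2$ and $c_1\ne c_2$, in the all-edges case the blocks $C_1$ and $C_2$ between them supply every pair of $\mathbb{Z}_g\times\mathbb{Z}_g$: each pair $(a,a)$ is covered by any block with $c_i\ne a$, and all off-diagonal pairs are covered by every edge block. Your argument for the mixed case (some coordinate with $x_j=y_j$ forcing a $0$ into a non-first column of the repeated row via the pair $(0,1)$) is fine, modulo the tacit assumptions that $g\ge 2$ and that $x_j$ is not isolated in $G_j$, which the paper also glosses over. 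With the distinct-constant-symbols device substituted for your unachievable ``refinement,'' the proof closes.
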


\begin{proof} Without loss of generality, we assume that  for each $1\leq i\leq g$, the first column of $CA(n_i,G_i,g)$
 is a constant column on symbol $i$ and for each $g+1\leq i\leq k$, the first column of $CA(n_i,G_i,g)$ is a constant 
 column on symbol 1. 
  Let $C_i$ be the array 
 obtained from $CA(n_i,G_i,g)$ by removing the first column. Form an array $A$ with 
 $\underset{i=1}{\overset{k}\prod} |V(G_i)|$ rows and 
 $\underset{i=1}{\overset{k}\sum} n_i -k$ columns, indexing rows as $(v_1,v_2,...,v_k)$, where $v_i\in V(G_i)$.
 Row $(v_1,v_2,...,v_k)$ is 
 obtained  by horizontally concatenating the rows correspond to the vertex $v_i$ of  $C_i$, for $1\leq i\leq k$. 
 Consider two distinct rows   $(u_1,u_2,\ldots,u_k)$ and $(v_1,v_2,\ldots,v_k)$  of $A$ which correspond to adjacent vertices in $G$.  
  Two distinct vertices $(u_1,u_2,\ldots,u_k)$ and $(v_1,v_2,\ldots,v_k)$  are adjacent if and only if 
either $u_iv_i\in E(G_i)$ or $u_i=v_i$ for each $1\leq i\leq k$. Since the vertices are distinct, $u_iv_i\in E(G_i)$ for at least one index $i$.
When $u_i=v_i$, all pairs of the form $(a,a)$ are covered. When $u_iv_i\in E(G_i)$ all remaining pairs are covered because two different rows of $C_i$ correspond to adjacent vertices in $G_i$ are selected.

\end{proof}

\noindent Using the definition of strong product of graphs we have following result as a corollary.
\begin{corollary}
 Let $G=G_1\ast G_2\ast \cdots \ast G_k$, $k\geq 2$ and $g$ be a positive integer, where $\ast\in\{\Box,\times\}$. Then,
  $CAN(G,g)\leq \underset{i=1}{\overset{k}\sum} CAN(G_i,g)-k$.
 
\end{corollary}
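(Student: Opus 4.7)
The plan is to observe that both the Cartesian and the direct product are spanning subgraphs of the strong product on the same vertex set, and then invoke the monotonicity of the covering array number under taking spanning subgraphs.

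First I would state and verify the structural inclusion: with common vertex set $V(G_1)\times\cdots\times V(G_k)$, every edge of $\Box_{i=1}^k G_i$ and every edge of $\times_{i=1}^k G_i$ is also an edge of $\boxtimes_{i=1}^k G_i$. This is immediate from the definitions, since the strong product edge condition ``for each $i$, $x_i=y_i$ or $x_iy_i\in E(G_i)$'' simultaneously weakens the Cartesian condition (equality in all but one coordinate) and the direct condition (adjacency in every coordinate).

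Next I would record the elementary monotonicity principle: if $H$ is a spanning subgraph of a graph $G$, then $CAN(H,g)\le CAN(G,g)$, because the qualitative independence requirement on adjacent rows in $G$ is strictly stronger than the requirement for $H$; thus any $CA(n,G,g)$ is also a $CA(n,H,g)$.

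Combining these two facts, for $\ast\in\{\Box,\times\}$ we have
\begin{equation*}
CAN\bigl(G_1\ast G_2\ast\cdots\ast G_k,\,g\bigr)\le CAN\bigl(G_1\boxtimes G_2\boxtimes\cdots\boxtimes G_k,\,g\bigr),
\end{equation*}
and the theorem just proved bounds the right-hand side by $\sum_{i=1}^k CAN(G_i,g)-k$, which yields the corollary.

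There is essentially no obstacle here; the only point worth being careful about is that the vertex sets of $\Box_{i=1}^k G_i$, $\times_{i=1}^k G_i$ and $\boxtimes_{i=1}^k G_i$ are identical, so ``spanning subgraph'' is literally correct and the same array indexed by tuples of vertices can be reused verbatim.
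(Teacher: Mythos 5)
Your proposal is correct and is exactly the argument the paper intends: the paper's one-line justification (``using the definition of strong product'') is precisely the observation that $\Box_{i=1}^k G_i$ and $\times_{i=1}^k G_i$ are spanning subgraphs of $\boxtimes_{i=1}^k G_i$, so the strong-product covering array works verbatim. Your version just spells out the monotonicity step explicitly (which could equivalently be obtained from Lemma~\ref{karenlemma} via the identity homomorphism), and you correctly use only the containment $E(\Box)\cup E(\times)\subseteq E(\boxtimes)$ rather than the equality that fails for $k>2$.
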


\noindent The lemma given below will be used in Theorem \ref{product}. 

 \begin{lemma}\label{karenlemma} (Meagher  and Stevens \cite{karen})
 Let $G$ and $H$ be graphs. If $G\rightarrow H$  then $CAN(G,g)\leq CAN(H,g)$.
 
 \end{lemma}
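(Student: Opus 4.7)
The plan is a direct pullback construction. Suppose $\varphi : V(G) \to V(H)$ is a homomorphism and let $n = CAN(H,g)$. Take an optimal covering array $A = CA(n,H,g)$, whose rows are indexed by $V(H)$. I would build a $|V(G)| \times n$ array $B$ on $\mathbb{Z}_g$ by setting the row of $B$ indexed by $v \in V(G)$ to be the row of $A$ indexed by $\varphi(v) \in V(H)$. In other words, $B$ is obtained from $A$ by duplicating rows according to the fibers of $\varphi$.

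Next I would verify that $B$ is a $CA(n,G,g)$. Pick any edge $uv \in E(G)$. By definition of a homomorphism, $\varphi(u)\varphi(v) \in E(H)$; since $H$ is a simple graph with no loops, this forces $\varphi(u) \neq \varphi(v)$. Hence the rows of $A$ indexed by $\varphi(u)$ and $\varphi(v)$ are qualitatively independent (because they correspond to adjacent vertices in $H$), and these are exactly the rows of $B$ indexed by $u$ and $v$. Thus every pair of rows of $B$ corresponding to adjacent vertices of $G$ is qualitatively independent, so $B$ is a $CA(n,G,g)$, giving $CAN(G,g) \leq n = CAN(H,g)$.

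There is essentially no obstacle: the whole argument hinges on the single observation that homomorphisms send edges to edges (and hence to distinct vertices, in the loopless setting). The construction is nothing more than row duplication, and qualitative independence is inherited from $A$ without any recomputation. The only point worth flagging in writing, to avoid ambiguity, is the standing assumption that the graphs are simple and loopless, which guarantees $\varphi(u) \neq \varphi(v)$ whenever $uv \in E(G)$; otherwise a homomorphism could collapse an edge to a loop and break qualitative independence.
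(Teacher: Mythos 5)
Your proof is correct and is exactly the standard pullback argument: the paper states this lemma without proof (citing Meagher and Stevens), but the cited proof and the paper's own ``colouring construction'' (which is the special case $G\rightarrow K_{\chi(G)}$) rely on precisely the row-duplication idea you describe. Your remark that simplicity/looplessness guarantees $\varphi(u)\neq\varphi(v)$ for an edge $uv$ is a sensible point to make explicit, and nothing further is needed.
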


\begin{theorem}\label{product}
  Let $G=G_1\times G_2\times \cdots \times G_k$, $k\geq 2$ and $g$ be a positive integer. 
 Suppose for each $1\leq i\leq k$ there exists a $CA(n_i,G_i,g)$.  Then there exists a 
 $CA(n,G,g)$ where $n=\min\limits_{i} n_i$. Hence, $CAN(G,g)\leq \underset{i}{\overset{}\min}$ $ CAN(G_i,g)$.
 
\end{theorem}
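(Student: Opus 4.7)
The plan is to use the fact, already noted in Section \ref{productgraph}, that in the direct product each projection $p_i : G_1 \times G_2 \times \cdots \times G_k \to G_i$ defined by $p_i(x_1,\ldots,x_k) = x_i$ is actually a graph homomorphism (not merely a weak one), because adjacency of $(x_1,\ldots,x_k)$ and $(y_1,\ldots,y_k)$ in $G$ requires $x_iy_i \in E(G_i)$ for every coordinate $i$. Combined with Lemma \ref{karenlemma}, this immediately yields $CAN(G,g) \leq CAN(G_i,g)$ for every $i$, and hence $CAN(G,g) \leq \min_i CAN(G_i,g)$.

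For the existential part, I would make the construction explicit so that the reader can see the array. Fix an index $j$ with $n_j = \min_i n_i$ and pull back the array $CA(n_j, G_j, g)$ along $p_j$: that is, define a $|V(G)| \times n_j$ array $A$ on $\mathbb{Z}_g$ whose row indexed by $(v_1, \ldots, v_k)$ is equal to the row of $CA(n_j, G_j, g)$ corresponding to $v_j$. For any two distinct vertices $(u_1,\ldots,u_k)$ and $(v_1,\ldots,v_k)$ adjacent in $G$, the definition of direct product forces $u_jv_j \in E(G_j)$, so the corresponding rows of $CA(n_j,G_j,g)$ are qualitatively independent, and therefore so are the corresponding rows of $A$. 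This shows $A$ is a $CA(n_j, G, g)$, giving the desired bound $n = \min_i n_i$.

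There is no real obstacle here; the theorem is essentially a one-line consequence of Lemma \ref{karenlemma} once one records that projections from direct products are genuine homomorphisms. The only point worth emphasising in the write-up is the contrast with the Cartesian and strong products (treated in the preceding theorem), where projections are only weak homomorphisms and so Lemma \ref{karenlemma} cannot be invoked in the same way; this is precisely why the direct product admits the much stronger ``minimum'' bound rather than a ``sum minus $k$'' bound.
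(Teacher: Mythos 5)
Your proposal is correct and follows the paper's own route: the paper likewise observes that $G_1\times\cdots\times G_k\rightarrow G_1$ (projections from direct products are genuine homomorphisms) and then invokes Lemma \ref{karenlemma}. Your explicit pull-back construction just unwinds that lemma, so the two arguments are essentially identical.
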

\begin{proof}
 Without loss of generality assume that $n_1 = \min\limits_{i} {n_i} $. It is known  that $G_1\times G_2\times \cdots \times G_k\rightarrow G_1$.   Using Lemma \ref{karenlemma}, we have $CAN(G,g)\leq CAN(G_1,g)$.

\end{proof}

\begin{theorem}
 Let $G=G_1\circ G_2\circ \cdots \circ G_k$, $k\geq 2$ and $g$ be a positive integer. 
 Suppose for each $1\leq i\leq k$  there exists a $CA(n_i,G_i,g)$. Then there exists a 
 $CA(n,G,g)$ where $n=\underset{i=1}{\overset{k}\sum} n_i -k+1$. Hence, 
  $CAN(G,g)\leq \underset{i=1}{\overset{k}\sum} CAN(G_i,g)-k+1$.
 \end{theorem}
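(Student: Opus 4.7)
The plan is to mirror the construction used for the strong product, but to keep one extra column because the adjacency condition in the lexicographic product is weaker. For adjacent vertices $(x_1,\ldots,x_k)$ and $(y_1,\ldots,y_k)$ in $G_1\circ\cdots\circ G_k$, there is a smallest index $j$ at which they disagree, and the definition forces $x_jy_j\in E(G_j)$ together with $x_i=y_i$ for $i<j$; the coordinates $x_{j+1},\ldots,x_k$ and $y_{j+1},\ldots,y_k$ are unconstrained. This observation is what I will exploit.

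First I would standardize each $CA(n_i,G_i,g)$ so that its first column is constant on the symbol $1$ (a single, common symbol across all factors, unlike in the strong product proof). Then set $C_1:=CA(n_1,G_1,g)$, keeping all $n_1$ columns, and for $2\leq i\leq k$ let $C_i$ be $CA(n_i,G_i,g)$ with its (constant) first column deleted, so $C_i$ has $n_i-1$ columns. Form the array $A$ with rows indexed by $(v_1,\ldots,v_k)\in V(G_1)\times\cdots\times V(G_k)$, the row for $(v_1,\ldots,v_k)$ being the horizontal concatenation of the row of $C_i$ corresponding to $v_i$, for $i=1,\ldots,k$. The total number of columns is $n_1+\sum_{i=2}^k(n_i-1)=\sum_{i=1}^k n_i-k+1$.

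To verify qualitative independence for two rows indexed by adjacent vertices, I would case-split on the smallest disagreement index $j$. If $j=1$, the $C_1$ block already contains two rows of $CA(n_1,G_1,g)$ for vertices $x_1y_1\in E(G_1)$, so all $g^2$ pairs are covered there. If $j>1$, the blocks $C_1,\ldots,C_{j-1}$ contribute pairs of identical entries only (since $x_i=y_i$ for $i<j$), while the block $C_j$ would, if the first column were retained, cover every pair by qualitative independence of rows $x_j$ and $y_j$ in $CA(n_j,G_j,g)$; deleting that column of constant $1$'s can destroy at most the single pair $(1,1)$. This pair is recovered from the very first column of $C_1$, which is still present and constant on $1$, and whose two entries are both $1$ since $x_1=y_1$.

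I do not anticipate a substantial obstacle; the argument is a direct analogue of the strong product construction. The one delicate point, and the reason the bound improves to $\sum n_i-k+1$ rather than $\sum n_i-k$, is that $C_1$ must retain its first column so that the pair $(1,1)$ lost by truncating $C_j$ is guaranteed to be covered in the common prefix regardless of which index $j$ arises. This saves exactly one column compared with the strong product argument.
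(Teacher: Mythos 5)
Your construction and verification coincide with the paper's own proof: standardize each factor's array to a constant first column on the symbol $1$, keep $C_1$ intact, truncate the first column of $C_2,\ldots,C_k$, concatenate, and recover the possibly lost pair $(1,1)$ from the retained constant column when the first disagreement occurs at an index $j>1$. The proposal is correct and essentially identical to the paper's argument, including the explanation of why exactly one column is saved relative to the strong product bound.
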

\begin{proof} We assume that  for each $1\leq i\leq k$, the first column of $CA(n_i,G_i,g)$
 is a constant column on symbol $1$.
 Let $C_1= CA(n_1,G_1,g)$.  For each $2\leq i\leq k$ remove the first column of $CA(n_i,G_i,g)$ to form $C_i$ with $n_i-1$ columns. Without loss of generality assume first column of each $CA(n_i,G_i,g)$ is constant 
 vector on symbol 1 while for each $2\leq i\leq k$, $C_i$ is the array obtained from $CA(n_i,G_i,g)$ by removing the first 
 column. Form an array $A$ with $\underset{i=1}{\overset{k}\prod} |V(G_i)|$  rows and  $\underset{i=1}{\overset{k}\sum} n_i -k+1$ columns, indexing 
 rows as   $(v_1,v_2,..,v_k)$, $v_i\in V(G_i)$.  Row $(v_1,v_2,\ldots,v_k)$ is  obtained  by horizontally
 concatenating the rows correspond to the vertex $v_i$ of $C_i$, for $1\leq i\leq k$. If two vertices 
 $(v_1,v_2,...,v_k)$ and $(u_1,u_2,...,u_k)$ are adjacent in $G$ then either $v_1u_1\in E(G_1)$ or $v_ju_j\in E(G_j)$ for 
 some $j\geq 2$ and $v_i=u_i$ for each $i< j$. In first case rows from $C_1$ covers each ordered pair of symbols while in second case 
 rows from $C_j$ covers each ordered pair of symbol probably except $(1,1)$. But this pair appears in each $C_i$ for $i<j$. Hence $A$ 
 is a covering array on $G$.
\end{proof}

\begin{definition} \rm A {\it proper colouring} on a graph is an assignment of colours to each vertex such that adjacent vertices receive a different colour. The chromatic number of a graph $G$, $\chi(G)$, 
is defined to be the size of the smallest set of colours such that a proper colouring exists with that set.
\end{definition} 
 
 \begin{definition}\rm 
 A {\it  maximum clique} in a graph $G$ is a maximum set of pairwise adjacent vertices. The maximum clique number of a graph $G$, $\omega(G)$, is defined to be the size of a maximum clique. 
 \end{definition}
 
\noindent Since there are homomorphisms $K_{\omega(G)}\rightarrow G\rightarrow K_{\chi(G)}$, we can 
 find bound on the size of a covering array on a graph from the graph's chromatic number and clique number.  For all graphs $G$,
$$CAN(K_{\omega(G)},g)\leq CAN(G,g)\leq CAN(K_{\chi(G)},g).$$

\noindent  We have  the following  results on proper colouring of product graphs  \cite{chromatic} 
 $$\chi(G_1 \Box G_2) = \max \{ \chi(G_1), \chi(G_2)\}.$$ 
 For other graph products there are no explicit formulae for chromatic number but following bounds are mentioned in \cite{HBGP}.
 $$\chi(G_1 \times G_2) \leq \min \{ \chi(G_1), \chi(G_2)\}$$
 $$\chi(G_1 \boxtimes G_2) \leq \chi(G_1 \circ G_2) \leq  \chi(G_1) \chi(G_2).$$
  A proper colouring of $G_1 \ast G_2$ with $\chi(G_1 \ast G_2)$ colours is equivalent to a  homomorphism from 
  $G_1 \ast G_2$ to $K_{\chi(G_1 \ast G_2)}$ for any $\ast \in\{\Box, \times, \boxtimes, \circ \}$. 
 Hence $$CAN(G_1 \Box G_2, g) \leq CAN(K_{\max\{ \chi(G_1), \chi(G_2)\}},g)$$
 $$CAN(G_1 \times G_2, g) \leq CAN(K_{\min\{ \chi(G_1), \chi(G_2)\}},g) $$
 $$CAN(G_1 \boxtimes G_2, g) \leq CAN(K_{\chi(G_1)\chi(G_2)},g) $$
 $$CAN(G_1 \circ G_2, g) \leq CAN(K_{\chi(G_1)\chi(G_2)},g) .$$
\noindent  Note that $G_1\rightarrow G_1 \ast G_2$ and $G_2\rightarrow G_1 \ast G_2$  for $\ast \in\{\Box,\boxtimes,\circ\}$ 
which gives 
  $$max\{CAN(G_1, g), CAN(G_2, g)\}\leq CAN(G_1 \ast G_2, g).$$
 We now describe colouring construction of covering array on graph $G$. If $G$ is a $k$-colourable graph then build a covering array $CA(n, k, g)$ and without loss of generality associate 
 row $i$ of  $CA(n, k, g)$ with colour $i$ for $1\leq i\leq k$.  In order to construct $CA(n,G,g)$, we assign row $i$ of $CA(n, k, g)$ to all the vertices having colour $i$ in $G$. 
 
 \begin{definition}\rm An orthogonal array $OA(k,g)$ is a $k\times g^2$ array with entries from $\mathbb{Z}_g$ having the properties that 
 in every two rows, each ordered pair of symbols from $\mathbb{Z}_g$ occurs exactly once.  
 \end{definition}
 
 \begin{theorem}\label{OA} \cite{Colbourn} If $g$ is prime or power of prime, then one can construct $OA(g+1,g)$.  
 \end{theorem}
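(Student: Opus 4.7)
The plan is to use the classical finite field construction of orthogonal arrays. Since $g$ is a prime or prime power, the finite field $\mathbb{F}_g$ of order $g$ exists, and I can identify $\mathbb{Z}_g$ with $\mathbb{F}_g$ as an index set (the orthogonal array property only requires a $g$-element alphabet, so the algebraic structure on $\mathbb{F}_g$ is used only to build the array, not to label its entries).

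First I would index the $g^2$ columns of the array by the ordered pairs $(x,y)\in \mathbb{F}_g\times \mathbb{F}_g$. Then I would produce $g+1$ rows as follows: for each $\alpha\in \mathbb{F}_g$, define a row whose $(x,y)$-entry is $\alpha x+y$, giving $g$ rows indexed by the field elements; finally, add one additional row whose $(x,y)$-entry is simply $x$. This yields a $(g+1)\times g^2$ array over $\mathbb{F}_g$.

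The verification splits into two cases. For two rows $\alpha\neq \beta$ among the first $g$, and for any target pair $(a,b)\in \mathbb{F}_g\times \mathbb{F}_g$, the system $\alpha x+y=a$, $\beta x+y=b$ has coefficient matrix with determinant $\alpha-\beta\neq 0$, hence a unique solution $(x,y)$; so every ordered pair appears exactly once in these two rows. For the extra row together with row $\alpha$, the system $x=a$, $\alpha x+y=b$ trivially has the unique solution $x=a$, $y=b-\alpha a$. Since there are $g^2$ columns and each of the $g^2$ ordered pairs appears at least once, it appears exactly once in every pair of rows, establishing the orthogonal array property.

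There is no real obstacle in this argument; the only ingredient needed beyond elementary counting is the existence of multiplicative inverses in $\mathbb{F}_g$, which is precisely where the hypothesis that $g$ is a prime power is used. If one wanted to make the result stand alone, the most delicate step would be justifying the existence of $\mathbb{F}_g$ for prime powers, but this is a standard fact from field theory and can be cited directly.
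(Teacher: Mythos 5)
Your construction is correct and is precisely the classical finite-field construction that the paper's cited reference (Colbourn--Dinitz) uses for this result; the paper itself states the theorem without proof, deferring to that source. Both the two-case verification and the observation that the prime-power hypothesis enters only through the existence of $\mathbb{F}_g$ (equivalently, of the inverse of $\alpha-\beta$) are exactly as in the standard argument, so there is nothing to add.
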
 
 The set of rows in an orthogonal array $OA(k,g)$ is a set of $k$ pairwise qualitatively independent vectors from 
 $\mathbb{Z}_g^{g^2}$. For $g=2$, by Theorem \ref{OA},  there are three qualitatively independent vectors from 
 $\mathbb{Z}_2^{4}$.  Here we give some examples where the lower bound
 on $CAN(G_1\Box G_2,g)$ is achieved, that is, $CAN(G_1\Box G_2,g)=max\{CAN(G_1,g), CAN(G_2,g)\}. $ 
 
\begin{example} \rm If  $G_1$ and $G_2$ are bicolorable graphs, then $\chi(G_1 \Box G_2)=2$. Let $x_1$ and $x_2$  be two qualitatively independent vectors
in $\mathbb{Z}_g^{g^2}$. Assign vector $x_i$ to all the vertices of $G_1 \Box G_2$ having colour $i$ for $i=1,2$ to get a covering array with $CAN(G_1 \Box G_2, g) = g^2.$
 \end{example}

\begin{example}\rm  If  $G_1$ and $G_2$ are complete graphs,  then  $CAN(G_1 \Box G_2, g) = max\{CAN(G_1, g), CAN(G_2, g)\}. $
 \end{example}
  
  \begin{example} \rm  If $G_1$ is bicolorable and $G_2$ is a complete graph on $k\geq 3$ vertices, then 
  $CAN(G_1 \Box G_2, g) = CAN(G_2, g)$. In general,  if $\chi(G_1) \leq \chi(G_2)$ and $G_2$ is a complete graph, then
  $CAN(G_1 \Box G_2, g) = CAN(G_2, g)$.
 \end{example}

\begin{example} \rm  If $P_m$ is a path of length $m$ and $C_n$ is an odd cycle of length $n$, then $\chi(P_m \Box C_n)=3$. Using Theorem \ref{OA}, we 
get a set  of  three qualitatively independent vectors in  $\mathbb{Z}_g^{g^2}$ for $g\geq 2$.  Then the colouring construction of covering arrays gives us a covering 
array on $P_m\Box C_n$ with $CAN(P_m\Box C_n, g) = g^2$. 
 \end{example}
 
\begin{lemma}\cite{HBGP} Let $G_1$ and $G_2$ be graphs and $Q$ be a clique of $G_1\boxtimes G_2$. Then 
 $Q= p_1(Q)\boxtimes p_2(Q)$, where $p_1(Q)$ and $p_2(Q)$ are cliques of $G_1$  and $G_2$, respectively. 
 \end{lemma}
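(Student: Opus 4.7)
The plan is to read the statement as an assertion about a (inclusion‑maximal) clique $Q$, since a non‑maximal clique such as $\{(a_1,b_1),(a_2,b_2)\}$ with $a_1a_2\in E(G_1)$ and $b_1b_2\in E(G_2)$ projects to $p_1(Q)\boxtimes p_2(Q)$ having four vertices, violating the stated equality. Granting maximality, the proof splits into three short steps which I will carry out in sequence, with the heart of the argument being the description of adjacency in the strong product.

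First, I would verify that $p_1(Q)$ and $p_2(Q)$ are cliques in $G_1$ and $G_2$ respectively. Pick two distinct elements $a,a'\in p_1(Q)$; then by definition of the projection there exist $b,b'\in V(G_2)$ with $(a,b),(a',b')\in Q$, and these two vertices of $Q$ are distinct (their first coordinates differ). Since $Q$ is a clique, $(a,b)$ and $(a',b')$ are adjacent in $G_1\boxtimes G_2$, so the strong‑product adjacency rule forces either $a=a'$ or $aa'\in E(G_1)$; the first option is excluded, leaving $aa'\in E(G_1)$. The same argument applied to the second coordinate handles $p_2(Q)$.

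Second, I would show that $p_1(Q)\boxtimes p_2(Q)$ is itself a clique in $G_1\boxtimes G_2$ containing $Q$. Containment is immediate since every $(x_1,x_2)\in Q$ has $x_i\in p_i(Q)$, and the vertex set of the strong product of subgraphs is precisely $p_1(Q)\times p_2(Q)$. For the clique property, take two distinct vertices $(a,b),(a',b')\in p_1(Q)\times p_2(Q)$; by Step~1, either $a=a'$ or $aa'\in E(G_1)$, and likewise for the second coordinate, which is exactly the definition of adjacency in $G_1\boxtimes G_2$.

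Finally, I would invoke maximality: $Q$ is a clique contained in the clique $p_1(Q)\boxtimes p_2(Q)$, so by maximality $Q=p_1(Q)\boxtimes p_2(Q)$. The main obstacle is really conceptual rather than technical, namely recognizing the implicit maximality hypothesis; once that is settled, each step reduces to a direct unpacking of the strong‑product adjacency rule, and no further machinery is needed.
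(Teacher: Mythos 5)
The paper does not actually prove this lemma; it is imported verbatim from the Handbook of Product Graphs \cite{HBGP}, so there is no in-paper argument to compare yours against. Your blind proof is correct and is essentially the standard one: project, check completeness of the projections via the strong-product adjacency rule, observe that $p_1(Q)\boxtimes p_2(Q)$ is a complete subgraph containing $Q$, and conclude by maximality. Your preliminary observation is also the right one to make: as literally stated (with the paper's own definition of a clique as merely a set of pairwise adjacent vertices) the equality fails for non-maximal $Q$, and the source reference indeed uses ``clique'' to mean a maximal complete subgraph; the paper only ever applies the lemma to maximum cliques, where your reading is satisfied. One small loose end: if ``clique'' is read as ``maximal'' uniformly, then the conclusion that $p_1(Q)$ and $p_2(Q)$ are cliques requires their maximality too, which your Step 1 does not address --- but it follows by the same device as your Step 3, since enlarging $p_1(Q)$ to a strictly larger complete subgraph of $G_1$ would produce a complete subgraph of $G_1\boxtimes G_2$ strictly containing $Q$.
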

Hence a maximum size clique of $G_1\boxtimes G_2$ is product of maximum size cliques from $G_1$ and $G_2$. That is, 
$\omega(G_1\boxtimes G_2)= \omega(G_1)\omega(G_2)$. Using the graph homomorphism, this results into another lower bound on 
$CAN(G_1\boxtimes G_2,g)$ as $CAN(K_{\omega(G_1)\omega(G_2)},g)\leq CAN(G_1\boxtimes G_2,g)$. Following are some examples 
where this lower bound can be achieved.
\begin{example} \rm If $G_1$ and $G_2$ are nontrivial bipartite graphs then 
 $\omega(G_1 \boxtimes G_2)= \chi(G_1\boxtimes G_2)$ which is 4. Hence $CAN(G_1\boxtimes G_2,g)= CAN(K_4,g)$, which is of optimal
 size.
\end{example}

\begin{example}\rm If  $G_1$ and $G_2$ are complete graphs,  then $G_1\boxtimes G_2$ is again a complete graph. Hence 
$CAN(G_1\boxtimes G_2,g)= CAN(K_{\omega(G_1\boxtimes G_2)},g)$.
\end{example}

\begin{example}\rm If $G_1$ is a bipartite graph and $G_2$ is a complete graph on $k\geq 2$ vertices, then 
$\omega(G_1\boxtimes G_2)= \chi(G_1\boxtimes G_2)= 2k$. Hence  $CAN(G_1\boxtimes G_2,g)= CAN(K_{2k},g)$.
\end{example}

\begin{example}\rm If $P_m$ is a path of length $m$ and $C_n$ is an odd cycle of length $n$, then 
$\omega(P_m\boxtimes C_n)=4$ and $\chi(P_m \boxtimes C_n)=5$.  Here we have $CAN(K_4,g)\leq CAN(G,g)\leq CAN(K_5,g)$.  For $g\geq 4$, using Theorem \ref{OA}, we 
get a set  of  five qualitatively independent vectors in  $\mathbb{Z}_g^{g^2}$.  Then the colouring construction of 
covering arrays gives us a covering array on $P_m\boxtimes C_n$ with $CAN(P_m\boxtimes C_n, g) = g^2$. 
 
\end{example}

 \section{Optimal size covering arrays over the  Cartesian product of  graphs } \label{Cayley}
 
 \begin{definition} \rm Two graphs $G_1=(V,E)$ and $G_2=(V^{\prime},E^{\prime})$ are said to be isomorphic if there is a bijection mapping $\varphi$ from the vertex set $V$ to the vertex set $V^{\prime}$ such that $(u,v)\in E$ if and only if $(\varphi(u),\varphi(v))\in E^{\prime}$. The mapping $\varphi$ is called an isomorphism. An automorphism of a graph is an isomorphism from the graph to itself. 
 \end{definition}
 \noindent  The set of all automorphisms of a graph $G$ forms a group, denoted $Aut(G)$, the automorphism group of $G$.

 \begin{theorem}\label{A}
Let $G_1$ be a graph having the property that $Aut(G_1)$ contains a fixed point free automorphism which maps every vertex to its neighbour. 
Then for any bicolourable graph $G_2$, $$CAN(G_1 \square  G_2,g)=CAN(G_1,g).$$
\end{theorem}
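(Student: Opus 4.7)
The plan is to prove the two inequalities separately. The lower bound $CAN(G_1,g)\leq CAN(G_1 \Box G_2,g)$ is immediate from what has already been noted in the excerpt: there is a homomorphism $G_1\rightarrow G_1\Box G_2$ (any fibre inclusion $u\mapsto (u,v_0)$ for a fixed $v_0\in V(G_2)$ works), so Lemma \ref{karenlemma} yields the inequality.

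For the upper bound I would construct an explicit covering array on $G_1\Box G_2$ of size $n:=CAN(G_1,g)$. Let $M$ be an optimal $CA(n,G_1,g)$ and write $r_u$ for the row of $M$ indexed by $u\in V(G_1)$. Let $\sigma\in Aut(G_1)$ be the hypothesised fixed-point-free automorphism with $u\sigma(u)\in E(G_1)$ for every $u$, and let $V(G_2)=X\cup Y$ be a bipartition witnessing $\chi(G_2)\leq 2$. Form a $|V(G_1)|\cdot |V(G_2)|\times n$ array whose row indexed by $(u,v)$ is $r_u$ if $v\in X$, and $r_{\sigma(u)}$ if $v\in Y$.

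The verification splits into the two types of edges of $G_1\Box G_2$. An edge of the form $(u,v_1)(u,v_2)$ with $v_1v_2\in E(G_2)$ has one endpoint in each part, so the two assigned rows are $r_u$ and $r_{\sigma(u)}$; qualitative independence holds because $u$ and $\sigma(u)$ are adjacent in $G_1$ by hypothesis, so these are rows of $M$ indexed by an edge. An edge of the form $(u_1,v)(u_2,v)$ with $u_1u_2\in E(G_1)$ has both endpoints in the same part: the assigned rows are either $r_{u_1},r_{u_2}$ (if $v\in X$) or $r_{\sigma(u_1)},r_{\sigma(u_2)}$ (if $v\in Y$). In the latter case $\sigma$ being an automorphism gives $\sigma(u_1)\sigma(u_2)\in E(G_1)$, so qualitative independence again follows from $M$ being a covering array on $G_1$. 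Combining the two bounds gives the claimed equality.

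The only conceptual point is recognising how the two hypotheses interact: bicolourability of $G_2$ splits its vertices into two independent sets on which we lay two copies of $M$, and the automorphism $\sigma$ (fixed-point-free, sending each vertex to a neighbour) exhibits the second copy as a permuted relabelling so that rows across any $G_2$-edge land on an edge of $G_1$. There is no real obstacle beyond writing down this construction and performing the two adjacency case checks.
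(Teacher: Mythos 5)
Your proposal is correct and follows essentially the same route as the paper: the paper also takes a proper 2-colouring of $G_2$, assigns row $u$ of an optimal $CA(n,G_1,g)$ to $(u,v)$ when $v$ has one colour and row $\phi(u)$ when $v$ has the other, and checks the same two adjacency cases (same layer via the automorphism property, cross layer via $u\sim\phi(u)$). The only difference is that you make the lower bound via the fibre-inclusion homomorphism explicit, which the paper leaves to its earlier general inequality $\max_i CAN(G_i,g)\leq CAN(G_1\Box G_2,g)$.
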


\begin{proof}  Consider the set $\Gamma=\{\phi \in Aut(G_1)~|~ \phi(u)\in N(u)-\{u\} \mbox{ for all } u\in V(G_1)\}$ where $N(u)$ denotes the set of neighbours of $u$. 
 From the assumption, $\Gamma$ is not empty. 
Consider a 2-colouring of $G_2$ with colours $0$ and $1$.  Let $W_0=\{(u,v)\in V(G_1\square G_2) ~|~\mbox{colour}(v)=0\}$ and $W_1=\{(u,v)\in V(G_1\square G_2) ~|~\mbox{colour}(v)=1\}$. Note that $W_0 $ and $W_1$ partition $V(G_1\square G_2)$ in two two parts. 
 Let the rows of covering array $CA(G_1,g)$ be indexed by $u_1,u_2,\ldots,u_k$.
 Form an array $C$ with $|V(G_1 \Box G_2)|$ rows and $CAN(G_1,g)$
 columns, indexing rows as $(u,v)$ for $1\leq u\leq |V(G_1)|$, $1\leq v \leq  |V(G_2)|$. If $(u,v)\in W_0$, row $(u,v)$ is row $u$ of $CA(G_1,g)$; otherwise if 
  $(u,v)\in W_1$, row $(u,v)$ is row $\phi(u)$ of $CA(G_1,g)$.   We verify that $C$ is a $CA(G_1\Box G_2, g)$. Consider two adjacent vertices $(u_1,v_1)$ and $(u_2,v_2)$ 
  of $C$. \\ (i) Let $(u_1,v_1)$ and $(u_2,v_2)$ belong to $W_i$, then $(u_1,v_1)\sim(u_2,v_2)$ if and only if $u_1 \sim u_2$ and $v_1=v_2$. 
  When $(u_1,v_1)$ and $(u_2,v_2)$ belong to $W_0$, rows $(u_1,v_1)$ and $(u_2,v_2)$ are rows $u_1$ and $u_2$ of $CA(G_1,g)$ respectively.  
  As $u_1\sim u_2$,  rows  $u_1$ and $u_2$ are 
qualitatively independent in $CA(G_1,g)$. When $(u_1,v_1)$ and $(u_2,v_2)$ belong to $W_1$, rows $(u_1,v_1)$ and $(u_2,v_2)$ are rows $\phi(u_1)$ and $\phi(u_2)$ of $CA(G_1,g)$ respectively.  As $\phi(u_1)\sim \phi(u_2)$,  rows  $\phi(u_1)$ and $\phi(u_2)$ are 
qualitatively independent in $CA(G_1,g)$. Therefore, rows  $(u_1,v_1)$ and $(u_2,v_2)$ are 
qualitatively independent in $C$.\\
(ii) Let  $(u_1,v_1)\in W_0$ and $(u_2,v_2)\in W_1$.    In this case, $ (u_1,v_1) \sim (u_2,v_2)$ if and only if  $u_1=u_2$ and $v_1\sim v_2$.  Let $u_1=u_2=u$.  Rows $(u,v_1)$  and  $(u,v_2)$ are rows $u$ and $\phi(u)$  of  $CA(G_1,g)$. 
 As $\phi $ is a fixed point free automorphism that maps every vertex to its neighbour, $u$ and $\phi(u)$ are adjacent in $G_1$. Therefore, the rows indexed by $u$ and $\phi(u)$ are qualitatively independent 
in $CA(G_1,g)$; therefore, rows  $(u_1,v_1)$ and $(u_2,v_2)$ are 
qualitatively independent in $C$.\\
\end{proof}

\begin{definition}\rm
 Let $H$ be a finite group and $S$ be a subset of $H\smallsetminus \{id\}$ such that $S = -S$ (i.e., $S$ is closed under inverse). The Cayley graph of $H$ generated by $S$, denoted 
  $Cay(H,S)$, is the undirected graph $G=(V,E)$ where $V=H$ and $E=\{(x,sx)~|~x\in H, s\in S\}$.   The Cayley graph is connected if and only if $S$ generates  $H$.
\end{definition}
\noindent Through out this article by $S = -S$ we mean, $S$ is closed under inverse for a given group operation

\begin{definition}\rm
 A  circulant graph $G(n,S)$ is a Cayley graph  on $\mathbb{Z}_n$.  That is, it is a graph whose vertices are labelled $\{0,1,\ldots,n-1\}$, with two vertices labelled $i$ and
 $j$  adjacent iff $i-j ~(\mbox{mod}~n)\in S$, where $S\subset \mathbb{Z}_n$ with $S=-S$ and $0\notin S$. 
 \end{definition}
\begin{corollary}
 Let $G_1(n,S)$ be a circulant graph and $G_2$ be a bicolorable graph, then $CAN(G_1(n,S) \Box G_2, g) = CAN(G_1(n,S), g)$.
 
\end{corollary}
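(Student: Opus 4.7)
The plan is to deduce this corollary directly from Theorem \ref{A} by exhibiting, for any circulant graph $G_1(n,S)$, an explicit automorphism in $\Gamma$, i.e.\ a fixed point free automorphism of $G_1(n,S)$ that sends every vertex to one of its neighbours. Pick any $s\in S$ (which is nonempty, since otherwise the corollary is trivial: $G_1(n,S)$ has no edges and every $CA$-condition is vacuous) and define $\phi_s:\mathbb{Z}_n\to \mathbb{Z}_n$ by $\phi_s(x)=x+s\pmod n$.

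The three verifications needed are short. First, $\phi_s$ is a graph automorphism: adjacency in $G_1(n,S)$ depends only on the difference $i-j\bmod n$, and $(i+s)-(j+s)\equiv i-j\pmod n$, so $ij\in E(G_1(n,S))$ iff $\phi_s(i)\phi_s(j)\in E(G_1(n,S))$; bijectivity on $\mathbb{Z}_n$ is immediate. Second, $\phi_s$ is fixed point free: $\phi_s(x)=x$ would force $s\equiv 0\pmod n$, contradicting $s\in S\subseteq \mathbb{Z}_n\setminus\{0\}$. Third, $\phi_s$ sends every vertex to a neighbour: $\phi_s(x)-x\equiv s\pmod n$ lies in $S$, so $x$ and $\phi_s(x)$ are adjacent in $G_1(n,S)$. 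Hence $\phi_s\in \Gamma$, and Theorem \ref{A} applies to give $CAN(G_1(n,S)\,\square\, G_2,g)=CAN(G_1(n,S),g)$ for every bicolourable $G_2$.

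There is no real obstacle here; the only subtlety is to notice that the natural ``shift by a generator'' maps on Cayley graphs of abelian groups are automorphisms precisely because the edge set is defined via a connection set closed under inversion, and that choosing the shift by an element of $S$ automatically forces the map into $\Gamma$. If desired, one could state and prove the analogous fact in the wider setting of Cayley graphs $Cay(H,S)$ on an arbitrary group $H$, using left (or right) translation by any $s\in S$; the circulant case is just the specialisation to $H=\mathbb{Z}_n$.
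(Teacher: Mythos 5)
Your proposal is correct and is essentially the paper's own argument: the paper defines $\phi(k)=k+j-i \pmod n$ for an adjacent pair $i,j$, which is exactly translation by an element of $S$, and then invokes Theorem \ref{A}. Your version just names the shift amount $s\in S$ directly and spells out the three verifications the paper leaves as ``easy to verify.''
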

\begin{proof} Let $i$ and $j$ be any two adjacent vertices in $G_1(n,S)$.  We define a mapping $\phi$ from  $\mathbb{Z}_n$ as follows:
 \begin{center}
  $\phi(k) = k+j-i ~(\mbox{mod}~ n)$
 \end{center}
 It is easy to verify that $\phi$ is an automorphism 
 and it sends every vertex to its neighbour. Hence $\phi \in \Gamma$ and the result
  follows.
\end{proof}

For a group $H$ and $S \subseteq H$, we denote conjugation of $S$ by elements of itself as
\begin{center}
 $S^S = \{ ss's^{-1}  | s, s'\in S\}$
\end{center}

\begin{corollary}
 Let $H$ be a finite group and $S \subseteq H\smallsetminus \{id\}$ is a generating set for $H$ such that $S = -S$ and 
 $S^S = S$. Then for
 $G_1 = Cay(H, S)$ and any bicolorable graph $G_2$,
 \begin{center}
  $CAN(G_1 \Box G_2, g) = CAN(G_1, g)$
 \end{center}
\end{corollary}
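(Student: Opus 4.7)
The plan is to apply Theorem \ref{A} by exhibiting an explicit fixed-point-free automorphism of $G_1 = \mathrm{Cay}(H,S)$ that sends every vertex to a neighbour. Motivated by the circulant case (where the required map was the translation $k \mapsto k+(j-i)$), I would try left multiplication by a fixed generator $s \in S$: define $\phi : H \to H$ by $\phi(x) = sx$. If $H$ is trivial the statement is vacuous, so we may assume $S$ is nonempty.

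Three of the required properties are essentially free. Left multiplication is a bijection with inverse $x \mapsto s^{-1}x$; it is fixed-point free because $\phi(x) = x$ would force $s = \mathrm{id}$, contradicting $s \in S \subseteq H \setminus \{\mathrm{id}\}$; and $\phi(x) = sx$ is adjacent to $x$ in $G_1$ just because $s \in S$. So the real work is to check that $\phi$ is a graph automorphism, and this is the single step where the hypotheses on $S$ must both be used.

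To verify edge preservation, I would compute
\[
\phi(y)\phi(x)^{-1} \;=\; (sy)(sx)^{-1} \;=\; s\,(yx^{-1})\,s^{-1}.
\]
If $yx^{-1} \in S$ then $s(yx^{-1})s^{-1} \in S^S = S$, so edges go to edges. For the reverse direction, if $s(yx^{-1})s^{-1} \in S$ I would use $S = -S$, which gives $s^{-1} \in S$, together with $S^S = S$ once more to recover $yx^{-1} \in S$ from the identity
\[
yx^{-1} \;=\; (s^{-1})\bigl(s(yx^{-1})s^{-1}\bigr)(s^{-1})^{-1}.
\]
This shows $\phi \in \Gamma$, and Theorem \ref{A} then delivers $CAN(G_1 \Box G_2, g) = CAN(G_1, g)$ for any bicolourable $G_2$. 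The only real content is this conjugation computation; the one slightly subtle point I anticipate is noticing that the \emph{reverse} direction of adjacency preservation is what forces us to use $S = -S$ in addition to $S^S = S$, since $S^S \subseteq S$ alone would only give the forward direction.
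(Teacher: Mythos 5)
Your proposal is correct and follows essentially the same route as the paper: define $\phi(x)=sx$ for a fixed $s\in S$, note it is bijective and fixed-point free since $s\neq \mathrm{id}$, and verify adjacency preservation via the conjugation $s(yx^{-1})s^{-1}\in S^S=S$, then invoke Theorem \ref{A}. Your additional check of the reverse direction of adjacency preservation (using $S=-S$) is a slightly more careful touch than the paper, which only proves the forward implication (sufficient here anyway, since a bijective edge-preserving self-map of a finite graph is automatically an automorphism), but it does not change the argument.
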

\begin{proof}
 We will show that there exists a $\phi \in Aut(G_1)$ such that $\phi$ is stabilizer free. 
 Define $\phi : H \rightarrow H$ as $\phi(h) = sh$ for some $s\in S$.
 It it easy to check that $\phi$ is bijective and being $s \neq id$ it is stabilizer free. Now to prove it is a 
 graph homomorphism we need to show it is an adjacency preserving map. It is sufficient to prove that $(h, s'h)\in E(G_1)$ 
  implies $(sh, ss'h) \in E(G_1)$.  As $ss'h = ss's^{-1}sh$ and $ss's^{-1} \in S$, we have $(sh, ss'h)\in E(G_1)$. 
 Hence $\phi \in \Gamma $ and Theorem \ref{A} implies the result.
\end{proof}
\begin{example}\rm
 For any abelian group $H$ and $S$ be a generating set such that $S = -S$ and $id \notin S$, we always get $S^S = S$.
\end{example}

\begin{example}\rm
For  $H = Q_8 = \{\pm1, \pm i, \pm j, \pm k\}$ and $S = \{\pm i, \pm j\}$, we have $S^S = S$ and $S = -S$.
\end{example}
\begin{example}\rm 
For  $H= D_8 = \langle a, b | a^2 = 1 = b^4, aba = b^3\rangle$ and $S= \{ab, ba\}$, we have $S^S = S$ and $S = -S$.
\end{example}
\begin{example}\rm 
For  $H = S_n$ and $S =$ set of all even cycles, we have $S^S = S$ and $S = -S$
\end{example}

\begin{theorem}
 Let $H$ be a finite group and $S$ be a generating set for $H$ such that
 \begin{enumerate}
  \item $S = -S$ and $id \notin S$
  \item $S^S = S$
  \item  there exist $s_1$ and $s_2$ in $S$ such that $s_1 \neq s_2$ and $s_1s_2 \in S$
 \end{enumerate}
then for $G_1= Cay(H, S)$ and any three colourable graph $G_2$
\begin{center}
 $CAN(G_1 \Box G_2, g) = CAN(G_1,g)$
\end{center}
\end{theorem}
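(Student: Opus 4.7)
The plan is to mimic the construction of Theorem \ref{A}, but now use three (rather than two) row-permutations of $CA(G_1,g)$, one per colour class of $G_2$. Since $G_1$ embeds into $G_1\Box G_2$ as the subgraph induced by $V(G_1)\times\{v\}$ for any $v\in V(G_2)$, the lower bound $CAN(G_1,g)\le CAN(G_1\Box G_2,g)$ comes for free, so the task reduces to exhibiting a $CA(G_1\Box G_2,g)$ of size $CAN(G_1,g)$.

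First I would fix $s_1,s_2\in S$ as supplied by hypothesis~3 and define three maps $\phi_0,\phi_1,\phi_2:H\to H$ by $\phi_0(h)=h$, $\phi_1(h)=s_1h$, $\phi_2(h)=(s_1s_2)h$. Because $s_1\in S$ and $s_1s_2\in S$, and because $S^S=S$, the argument used in the preceding corollary shows that each $\phi_i$ is a graph automorphism of $G_1=Cay(H,S)$, and for $i\in\{1,2\}$ it is fixed-point free since $id\notin S$. I would then fix a proper $3$-colouring $c:V(G_2)\to\{0,1,2\}$ and form an array $C$ whose rows are indexed by $V(G_1\Box G_2)$, placing in row $(u,v)$ the row of $CA(G_1,g)$ indexed by $\phi_{c(v)}(u)$.

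Verifying that $C$ is a covering array on $G_1\Box G_2$ splits as in Theorem \ref{A}. For adjacent $(u_1,v_1)\sim(u_2,v_2)$ with $v_1=v_2$, both rows are selected by the same $\phi_{c(v_1)}$, and since $u_1\sim u_2$ implies $\phi_{c(v_1)}(u_1)\sim\phi_{c(v_1)}(u_2)$, the rows are qualitatively independent. For $v_1\ne v_2$ we must have $u_1=u_2=u$ with $v_1\sim v_2$ in $G_2$ and hence $c(v_1)\ne c(v_2)$; the two rows assigned are two distinct elements of $\{u,s_1u,s_1s_2u\}$, and qualitative independence reduces to the claim that these three translates are pairwise adjacent in $Cay(H,S)$.

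This pairwise adjacency is the main obstacle and the one place all three hypotheses are used at once. The pair $\{u,s_1u\}$ is an edge because $s_1\in S$; the pair $\{u,s_1s_2u\}$ is an edge because $s_1s_2\in S$ (hypothesis~3); and for the nontrivial pair $\{s_1u,s_1s_2u\}$ the relating group element is $(s_1s_2u)(s_1u)^{-1}=s_1s_2s_1^{-1}$, which lies in $S$ precisely because $S^S=S$ applied to the pair $(s_1,s_2)$. The condition $s_1\ne s_2$ is used only to guarantee that the three translates of $u$ are genuinely distinct, so that the three colour classes receive distinct rows. The whole strategy is engineered so that $\{id,s_1,s_1s_2\}$ spans a triangle in $Cay(H,S)$, generalising the single edge $\{id,s_1\}$ that drives the bicolourable case of Theorem \ref{A}.
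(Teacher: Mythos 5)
Your proposal is correct and follows essentially the same construction as the paper: assign to colour class $i$ of $G_2$ the rows of $CA(G_1,g)$ permuted by a left-translation automorphism of $Cay(H,S)$, chosen so that the three translating elements span a triangle through the identity. The only difference is cosmetic: you take the third translate to be $s_1s_2$ (so the edge $s_1u\sim s_1s_2u$ rests on $s_1s_2s_1^{-1}\in S^S=S$), whereas the paper takes $s_2^{-1}$ (so the edge $s_1u\sim s_2^{-1}u$ follows directly from $s_1s_2\in S$); both triangles are valid under the stated hypotheses.
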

\begin{proof} Define three distinct automorphisms of $G_1$,  $\sigma_{i} : H\rightarrow H$, for $i=0,1,2$, as $\sigma_0(u)=u$, $\sigma_1(u)=s_1u$, $\sigma_2(u)=s_2^{-1}u$. 
 Consider a three colouring of $G_2$ using the colours $0, 1$ and $2$.  Let $W_i=\{(u,v)\in V(G_1\square G_2) ~|~\mbox{colour}(v)=i\}$  for $i=0,1,2$. 
  Note that $W_0 $, $W_1$, and $W_2$ partition $V(G_1\square G_2)$ into three parts. 
 Let the rows of covering array $CA(G_1,g)$ be indexed by $u_1,u_2,\ldots,u_k$. Using $CA(G_1,g)$, form an array $C$ with $|V(G_1 \Box G_2)|$ rows and $CAN(G_1,g)$
 columns, indexing rows as $(u,v)$ for $1\leq u\leq |V(G_1)|$, $1\leq v \leq  |V(G_2)|$. If $(u,v)\in W_i$, row $(u,v)$ is row $\sigma_i(u)$ of $CA(G_1,g)$.  Consider two adjacent vertices  $(u_1,v_1)$ and  $(u_2,v_2)$ of $C$. \\
 (i) Let $(u_1,v_1)$ and $(u_2,v_2)$ belong to $W_i$. In this case,  $(u_1,v_1)\sim(u_2,v_2)$ if and only if $u_1 \sim u_2$ and $v_1=v_2$.  
 When $(u_1,v_1)$ and $(u_2,v_2)$ belong to $W_0$, rows $(u_1,v_1)$ and $(u_2,v_2)$ are rows $u_1$ and $u_2$ of $CA(G_1,g)$.  
 As $u_1 \sim u_2$ in $G_1$, the rows  $u_1$ and $u_2$ are qualitatively independent in $CA(G_1,g)$.    Let $(u_1,v_1)$ and $(u_2,v_2)$ belong to $W_1$ (res. $W_2$).   Similarly, as $s_1u_1\sim s_1u_2$ (res. $s_2^{-1}u_1 \sim s_2^{-1}u_1$)
the rows indexed by  
$s_1u_1$ and $s_1u_2$  (res. $s_2^{-1}u_1$ and $s_2^{-1}u_2$) are qualitatively independent in $CA(G_1,g)$.
Hence the rows  
$(u_1,v_1)$ and $(u_2,v_2)$  are qualitatively independent in $C$.\\ 
(ii) Let $(u_1,v_1)\in W_i$ and $(u_2,v_2)\in W_j$ for $0\leq i\neq j\leq 2$. In this case, $(u_1,v_1)\sim(u_2,v_2)$ if and only if $u_1 = u_2$ and $v_1\sim v_2$. 
Let $u_1=u_2=u$.\\ 
Let $(u,v_1)\in W_0$ and $(u,v_2)\in W_1$, then rows $(u,v_1)$ and $(u,v_2)$ are  rows $u$ and $s_1u$ of $CA(G_1,g)$ respectively.   Then as $u\sim s_1u$ the rows indexed by $(u,v_1)\in W_0$ and $(u,v_2)\in W_1$ are qualitatively independent in $C$. \\
Let $(u,v_1)\in W_0$ and $(u,v_2)\in W_2$.  Then, as $u\sim s_2^{-1}u$, the rows indexed by $(u,v_1)\in W_0$ and $(u,v_2)\in W_2$ are qualitatively independent in $C$. \\
Let $(u,v_1)\in W_1$ and $(u,v_2)\in W_2$.  Then, as $s_1u\sim s_2^{-1}u$, the rows indexed by $(u,v_1)\in W_1$ and $(u,v_2)\in W_2$ are qualitatively independent in $C$. 
\end{proof}

 \begin{theorem}
 Let $H$ be a finite group and $S$ is a generating set for $H$ such that
 \begin{enumerate}
  \item $S = -S$ and $id \notin S$
  \item $S^S = S$
  \item $\exists s_1$ and $s_2$ in $S$ such that $s_1 \neq s_2$ and $s_1s_2,  s_1s_2^{-1}\in S$
 \end{enumerate}
then for $G_1 = Cay(H, S)$ and any four colourable graph $G_2$
\begin{center}
 $CAN(G_1 \Box G_2, g) = CAN(G_1, g)$
\end{center}
\end{theorem}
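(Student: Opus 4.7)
The plan is to mimic the proof of the previous (three-colourable) theorem, but introduce a fourth automorphism $\sigma_3$ of $G_1$ whose choice is dictated precisely by the extra hypothesis $s_1s_2^{-1}\in S$.

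First, I would define four left-multiplication maps $\sigma_i\colon H\to H$ by
\[
\sigma_0(u)=u,\quad \sigma_1(u)=s_1u,\quad \sigma_2(u)=s_2^{-1}u,\quad \sigma_3(u)=s_1s_2^{-1}u.
\]
Each $\sigma_i$ is bijective, and using $S^S=S$ exactly as in the earlier corollary (if $(h,s'h)\in E(G_1)$ then $(\sigma_i(h),\sigma_i(s'h))$ differ by $t s' t^{-1}\in S^S=S$ where $t$ is the left-multiplier), each $\sigma_i$ is an automorphism of $G_1$. Distinctness of the four maps follows from condition (1) together with $s_1\ne s_2$ (giving $s_1s_2^{-1}\ne id$) and $s_1s_2\in S$ (giving $s_1\ne s_2^{-1}$, so in particular $\sigma_1\ne\sigma_2$ and $\sigma_3\ne\sigma_2$).

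Next, fix a proper $4$-colouring of $G_2$ using colours $0,1,2,3$ and let $W_i=\{(u,v)\in V(G_1\square G_2)\mid \mathrm{colour}(v)=i\}$; these partition $V(G_1\square G_2)$. Form the array $C$ with $|V(G_1\square G_2)|$ rows and $CAN(G_1,g)$ columns, where the row indexed by $(u,v)\in W_i$ is taken to be row $\sigma_i(u)$ of $CA(G_1,g)$. To verify $C$ is a $CA(G_1\square G_2,g)$, consider adjacent vertices $(u_1,v_1)\sim (u_2,v_2)$. If both lie in the same $W_i$, then $u_1\sim u_2$ in $G_1$ and $v_1=v_2$; since $\sigma_i\in \mathrm{Aut}(G_1)$ we have $\sigma_i(u_1)\sim \sigma_i(u_2)$ in $G_1$, so the corresponding rows are qualitatively independent. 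If $(u_1,v_1)\in W_i$, $(u_2,v_2)\in W_j$ with $i\ne j$, then $u_1=u_2=u$ and $v_1\sim v_2$, and we must check that $\sigma_i(u)\sim\sigma_j(u)$ in $G_1$ for all six unordered pairs $\{i,j\}\subset\{0,1,2,3\}$.

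This pairwise-adjacency check is the only real content and the step I expect to occupy the most space. For each pair I compute $\sigma_j(u)\sigma_i(u)^{-1}$ and show it lies in $S$:
\[
\begin{aligned}
&\{0,1\}\colon\ s_1\in S;\qquad \{0,2\}\colon\ s_2^{-1}\in S;\qquad \{0,3\}\colon\ s_1s_2^{-1}\in S;\\
&\{1,2\}\colon\ s_2^{-1}s_1^{-1}=(s_1s_2)^{-1}\in S\ \text{by }S=-S;\\
&\{1,3\}\colon\ (s_1s_2^{-1})s_1^{-1}=s_1s_2^{-1}s_1^{-1}\in S^S=S;\\
&\{2,3\}\colon\ (s_1s_2^{-1})(s_2^{-1})^{-1}=s_1\in S.
\end{aligned}
\]
Hypothesis (3) is used exactly for the pairs $\{1,2\}$ and $\{0,3\}$, and $S^S=S$ handles the $\{1,3\}$ pair, which is why all three hypotheses appear. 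This yields $CAN(G_1\square G_2,g)\le CAN(G_1,g)$; combined with the general lower bound $CAN(G_1,g)\le CAN(G_1\square G_2,g)$ from $G_1\to G_1\square G_2$ noted in Section~\ref{bound}, equality follows. The subtle part is not the verification itself but finding the correct fourth translate $s_1s_2^{-1}$ whose differences with $id$, $s_1$, and $s_2^{-1}$ all land in $S$; once one guesses it, condition (3) is tailor-made.
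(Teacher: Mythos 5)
Your proof is correct and follows essentially the same approach as the paper: assign to each of the four colour classes a left-translation automorphism of $Cay(H,S)$ and verify pairwise adjacency of the translates. The only (immaterial) difference is that you take the translates $id, s_1, s_2^{-1}, s_1s_2^{-1}$ where the paper takes $id, s_1, s_2, s_1s_2$, which merely swaps the roles of the hypotheses $s_1s_2\in S$ and $s_1s_2^{-1}\in S$ in the six pairwise checks.
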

\begin{proof}
Define four distinct automorphisms of $G_1$, $\sigma_i:H\rightarrow H$, $ i=0,1,2,3$ as $\sigma_0(u)=u$, $\sigma_1(u)=s_1u$, $\sigma_2(u)=s_2u$ and 
$\sigma_3(u)=s_1s_2 u$. Consider a four colouring of $G_2$ using the colours $0, 1, 2$ and $3$.  Let $W_i=\{(u,v)\in V(G_1\square G_2) ~|~\mbox{colour}(v)=i\}$  for $i=0,1,2,3$. 
 Let the rows of covering array $CA(G_1,g)$ be indexed by $u_1,u_2,\ldots,u_k$.  Form an array $C$ with $|V(G_1 \Box G_2)|$ rows and $CAN(G_1,g)$
 columns, indexing rows as $(u,v)$ for $1\leq u\leq |V(G_1)|$, $1\leq v \leq  |V(G_2)|$. If $(u,v)\in W_i$, row $(u,v)$ is row $\sigma_i(u)$ of $CA(G_1,g)$.  Consider two adjacent vertices  $(u_1,v_1)$ and  $(u_2,v_2)$ of $C$. \\
 (i) Let $(u_1,v_1)$ and $(u_2,v_2)$ belong to $W_i$. It is easy to verify that $(u_1,v_1)$ and $(u_2,v_2)$  are qualitatively independent.\\
(ii) Let $(u_1,v_1)\in W_i$ and $(u_2,v_2)\in W_j$ for $0 \leq i\neq j\leq 3$.   In this case, $(u_1,v_1)\sim(u_2,v_2)$ if and only if $u_1 = u_2$ and $v_1\sim v_2$. 
Let $u_1=u_2=u$.\\ 
Let $(u,v_1)\in W_0$ and $(u,v_2)\in W_i$ for $i=1,2,3$, then  row  $(u,v_1)$  and  $(u,v_2)$ are
 rows $u$ and  $\sigma_i(u)$ of $CA(G_1,g)$ respectively. 
Then as $u\sim \sigma_i(u)$ the rows $(u,v_1)$ and $(u,v_2)$ are qualitatively independent. \\

\noindent Let $(u,v_1)\in W_1$ and $(u,v_2)\in W_2$. Then rows $(u,v_1)$ and $(u,v_2)$ are rows $s_1u$ and $s_2u$ of $CA(G_1,g)$.   As $s_1u = s_1s_2^{-1}s_2u$ and $s_1s_2^{-1}\in S$, we get $s_1u\sim s_2u$. Hence the rows $(u,v_1)\in W_1$ and $(u,v_2)\in W_2$ are qualitatively independent. Similarly, as $s_1u=s_1 s_2^{-1}s_1^{-1}s_1s_2u$ and $s_1 s_2^{-1}s_1^{-1}\in S$ being $S^S=S$, we have 
$s_1u\sim s_1s_2u$. Hence the rows $(u,v_1)\in W_1$ and $(u,v_2)\in W_3$ are qualitatively independent. \\
Let $(u,v_1)\in W_2$ and $(u,v_2)\in W_3$. As $s_2u=s_1^{-1}s_1s_2u$ and $s_1^{-1}\in S$, we get $s_2u\sim s_1s_2u$. 
Hence the rows $(u,v_1)\in W_2$ and $(u,v_2)\in W_3$ are qualitatively independent. 
\end{proof}

\begin{example}
 $G = Q_8$ and $S= \{\pm i, \pm j, \pm k\}$. Here $s_1=i$ and $s_2=j$. 
\end{example}

\begin{example}
 $G = Q_8$ and $S= \{-1,\pm i, \pm j\}$. Here $s_1=-1$ and $s_2=i$. 
\end{example}
\begin{figure}
\begin{center}
\begin{tikzpicture}
\small{
\matrix[matrix of math nodes, anchor=south west,
        nodes={circle, draw, minimum size = 0.4cm},
        column sep = {0.5cm},
        row sep={0.35cm}]
{
             & |(0)|   &          & |(2)|    &          &  \\
             &         &          &           &          & |(3)|   \\
|(4)|        &         &          &           &          &          & |(-4)| \\
             &         &|(5)|     &           & |(6)|    &        &   &                & |(00)|   &          & |(02)|    &          &  \\
             &         &          &|(7)|      &          &        &  &                 &         &          &           &          & |(03)|   \\ 
             &         &          &           &          &        &  &  |(04)|        &         &          &           &          &          & |(-04)| \\
             &             & |(1)|  &          & |(i)|    &          &   &             &         &|(05)|     &           & |(06)|    &  \\
&             &         &          &           &          & |(j)|    &                &         &          &|(07)|      &          &   \\   
&|(-k)|       &         &          &           &          &          & |(k)| \\
&             &         &|(-j)|    &           & |(-1)|  &          &\\
&             &         &          &|(-i)|   &          &          &\\ 
};}

\begin{scope}[style=thick]
 \foreach \from/\to/\weight/\where 
                    in { 4/2/1/above,  4/3/1/above, 4/-4/1/right, 4/7/1/above, 4/5/1/right, 
                         0/5/1/above, 0/7/1/above, 0/6/1/above, 0/3/1/above, 0/2/1/above,
                         2/7/1/above, 2/6/1/right, 2/-4/1/above, 
                         3/5/1/right, 3/6/1/below, 3/-4/1/right, -4/7/1/above, -4/5/1/above, 
                         6/7/1/below, 6/5/1/below,
                        -k/i/1/above, -k/j/1/above, -k/k/1/right, -k/-i/1/above, -k/-j/1/right, 
                         1/-j/1/above, 1/-i/1/above, 1/-1/1/above, 1/j/1/above, 1/i/1/above,
                         i/-i/1/above, i/-1/1/right, i/k/1/above, 
                         j/-j/1/right, j/-1/1/below, j/k/1/right, k/-i/1/above, k/-j/1/above, 
                        -1/-i/1/below, -1/-j/1/below,
                         04/02/1/above,  04/03/1/above, 04/-04/1/right, 04/07/1/above, 04/05/1/right, 
                         00/05/1/above, 00/07/1/above, 00/06/1/above, 00/03/1/above, 00/02/1/above,
                         02/07/1/above, 02/06/1/right, 02/-04/1/above, 
                         03/05/1/right, 03/06/1/below, 03/-04/1/right, -04/07/1/above, -04/05/1/above, 
                         06/07/1/below, 06/05/1/below}
    \draw (\from) to [->]    (\to);
\end{scope}
\begin{scope}[style=thin]
 \foreach \from/\to/\weight/\where 
          in { 4/-k/1/above,  0/1/1/above, 2/i/1/right, 3/j/1/above, -4/k/1/right, 
               6/-1/1/above,  5/-j/1/above, 7/-i/1/above}
 \draw[gray] (\from) to [->]    (\to);
          \end{scope}
\begin{scope}[style=thin]
 \foreach \from/\to/\weight/\where 
          in { 04/-k/1/above,  00/1/1/above, 02/i/1/right, 03/j/1/above, -04/k/1/right, 
               06/-1/1/above,  05/-j/1/above, 07/-i/1/above}
  \draw[red] (\from) to [->]    (\to);
          \end{scope}
 \begin{scope}[style=thin]
 \foreach \from/\to/\weight/\where          
            in{ 4/04/1/above,  0/00/1/above, 2/02/1/right, 3/03/1/above, -4/-04/1/right, 
               6/06/1/above,  5/05/1/above, 7/07/1/above}
 \draw[blue] (\from) to [->]    (\to);
          \end{scope}

\end{tikzpicture}
\caption{$Cay(Q_8, \{-1,\pm i, \pm j\})\Box K_3$}
 %$CAN(Cay(Q_8, \{-1,\pm i, \pm j\}) \Box K_3, g) = CAN(Cay(Q_8, \{-1,\pm i, \pm j\}), g)$}
 %\label{fig:1}
 \end{center}
\end{figure}

\section{Approximation algorithm for covering array on graph}\label{Approx}
In this section, we present an approximation algorithm for construction of covering array on a given graph $G=(V,E)$ with 
$k>1$ prime factors with respect to the Cartesian product.  
In 1988,  G. Seroussi and N H. Bshouty proved that the decision problem whether there exists a binary 
covering array of strength $t\geq 2$ and size $2^t$ on a given $t$-uniform hypergraph is NP-complete \cite{VS}.  
Also, construction of 
an optimal size covering array on a graph is at least as hard as finding its optimal size. 
 
\noindent We give an approximation algorithm  for  the Cartesian  product with  approximation ratio $O(\log_s |V|)$, where $s$  can be obtained from the 
number of symbols corresponding to each vertex. The following result by Bush is used in our approximation algorithm.   

\begin{theorem}\rm{\cite{GT}}\label{B} Let $g$ be a positive integer. If $g$ is written in standard form: $$g=p_1^{n_1}p_2^{n_2}\ldots p_l^{n_l}$$ where $p_1,p_2,\ldots,p_l$ are distinct primes, and if 
$$r=\mbox{min}(p_1^{n_1},p_2^{n_2},\ldots, p_l^{n_l}),$$  then one can construct  $OA(s,g)$ where 
 $s =1+ \max{(2,r)}$.
\end{theorem}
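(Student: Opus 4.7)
The plan is to prove this classical result of Bush by reducing to the prime-power case already handled by Theorem~\ref{OA}, and then gluing the factors together via the Chinese Remainder Theorem. Write $q_i = p_i^{n_i}$ and fix $s = 1 + \max(2, r)$; since each $q_i$ is a prime power with $q_i \geq 2$, we have $r \geq 2$, and in particular $s = r+1 \leq q_i+1$ for every index $i$. This inequality is what makes the bottleneck factor (the one achieving the minimum $r$) compatible with all the others.

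For each $i$, invoke Theorem~\ref{OA} to obtain an $OA(q_i+1, q_i)$ over $\mathbb{Z}_{q_i}$, and then keep only the first $s$ rows to obtain an array $A_i$ of size $s \times q_i^2$ over $\mathbb{Z}_{q_i}$. Because the orthogonal-array condition is stated on each pair of rows, restricting to a subset of rows preserves it, so every $A_i$ is itself an $OA(s, q_i)$. Next, use CRT to identify $\mathbb{Z}_g$ with $\mathbb{Z}_{q_1} \times \cdots \times \mathbb{Z}_{q_l}$, and assemble the final array $A$ of size $s \times g^2$ by indexing its columns with tuples $(c_1,\ldots,c_l) \in \prod_i \{1,\ldots,q_i^2\}$ and placing in row $j$, column $(c_1,\ldots,c_l)$ the unique element of $\mathbb{Z}_g$ whose CRT image is $(A_1[j,c_1],\ldots,A_l[j,c_l])$.

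Verification reduces to a coordinatewise check: given two distinct rows $j, j'$ and an arbitrary pair $(a,b)\in \mathbb{Z}_g^2$, decompose $a \leftrightarrow (a_1,\ldots,a_l)$ and $b \leftrightarrow (b_1,\ldots,b_l)$ under the CRT isomorphism; for each $i$, the orthogonal-array property of $A_i$ provides a unique column $c_i$ in which rows $j, j'$ read $a_i$ and $b_i$ respectively, so the combined column $(c_1,\ldots,c_l)$ is the unique column of $A$ carrying the pair $(a,b)$ across rows $j,j'$. The main obstacle is really just this coordinatewise bookkeeping combined with the CRT bijection; no deeper argument is needed beyond cleanly invoking Theorem~\ref{OA} on each prime-power factor and observing that orthogonality in every coordinate lifts to orthogonality over $\mathbb{Z}_g$.
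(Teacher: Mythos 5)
Your proof is correct: the paper states this result only as a citation to Bush \cite{GT} and includes no proof of its own, and your argument is precisely the classical MacNeish--Bush product construction used in that source --- truncate an $OA(q_i+1,q_i)$ on each prime-power factor $q_i=p_i^{n_i}$ to $s$ rows (legitimate since $s=r+1\leq q_i+1$) and compose coordinatewise via the Chinese Remainder Theorem. The row-restriction step, the counting of columns ($\prod_i q_i^2=g^2$), and the exactly-once verification are all handled correctly, so nothing is missing.
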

We are given a wighted connected graph $G=(V,E)$ with each vertex having the same weight $g$. 
In our approximation algorithm, we use  a technique from \cite{HBGP} for prime factorization of $G$ with respect to the Cartesian product. 
 This can be done in $O(E \log V$) time. For details see \cite{HBGP}. After obtaining prime factors of $G$, we construct 
 strength two  covering array $C_1$ on maximum size prime factor. Then 
 using  rows of $C_1$,  we produce a covering array on $G$.\\

\noindent\textbf{APPROX $CA(G,g)$:}
\\\textbf{Input:}  A weighted connected  graph $G=(V,E)$ with $k>1$ prime factors with respect to the  Cartesian product. Each vertex has weight $g$; $g=p_1^{n_1}p_2^{n_2}\ldots p_l^{n_l}$ where 
$p_1$, $p_2, \ldots, p_l$ are primes. 
 \\\textbf{Output:} $CA(ug^2,G,g)$.
\\\textbf{Step 1:} Compute $s = 1 + \mbox{max}\{2,r\}$ where $r=\mbox{min}(p_1^{n_1},p_2^{n_2},\ldots, p_l^{n_l})$.
\\\textbf{Step 2:} Factorize $G$ into  prime factors with respect to the Cartesian product;
say $G = \Box_{i=1} ^{k} G_i$ where $G_i= (V_i,E_i)$ is a prime factor.
\\\textbf{Step 3:}  Suppose $V_1\geq V_2\geq \ldots\geq V_k$. For prime factor $G_1=(V_1, E_1)$ \textbf{do} 
\begin{enumerate}
\item Find the smallest positive integer $u$ such that $s^u\geq V_1$. That is, $u=\lceil \mbox{log}_s V_1\rceil$. 
\item Let $OA(s,g)$ be an orthogonal array and denote its $i$th row by $R_i$ for $i=1,2,\ldots,s$.  Total $s^u$ many row vectors $(R_{i_1}, R_{i_2},\ldots R_{i_u})$, each of length $ug^2$,  are formed by horizontally concatenating $u$ rows  
$R_{i_1}$, $ R_{i_2}$, $\ldots,$ $ R_{i_u}$   where $1\leq i_1, \ldots, i_u\leq s$. 
\item Form an $V_1 \times ug^2$ array $C_1$ by choosing any $V_1$ rows  out of $s^u$ concatenated row vectors. 
Each row in the array corresponds to a vertex in the graph $G_1$.  \end{enumerate}
\textbf{Step 4:}
From  $C_1$ we can construct an $V\times ug^2$ array $C$.  Index the rows of $C$ by $(u_1,u_2,\ldots,u_k)$, $u_i\in V(G_i)$. 
Set the row $(u_1,u_2,\ldots,u_k)$ to be identical to the row corresponding to $u_1+u_2+\ldots+u_k ~ \mbox{mod } V_1$   in $C_1$. Return $C$.

\vspace{1cm}\begin{theorem}
  Algorithm APPROX $CA(G,g)$  is a  polynomial-time $\rho(V)$ approximation algorithm for covering array on graph problem,  where 
 $$\rho(V) \leq \lceil \log_s \frac{V}{2^{k-1}} \rceil.$$
  \end{theorem}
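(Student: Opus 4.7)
The plan is to verify both the correctness of the output and the claimed approximation ratio, in three stages.

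First, I would argue that the intermediate array $C_1$ built in Step 3 is a covering array on the complete graph $K_{V_1}$, i.e., any two distinct rows of $C_1$ are qualitatively independent. By Theorem~\ref{B} any two rows of $OA(s,g)$ are qualitatively independent. Two distinct concatenated vectors $(R_{i_1},\ldots,R_{i_u})$ and $(R_{j_1},\ldots,R_{j_u})$ must differ in some coordinate $t$, and in the corresponding block of $g^2$ columns all $g^2$ ordered pairs already appear, so the two concatenated vectors are qualitatively independent. Selecting any $V_1$ of the $s^u \geq V_1$ concatenated vectors thus yields a $CA(ug^2, K_{V_1}, g)$.

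Second, I would show that the hash $h(u_1,\ldots,u_k) = u_1 + u_2 + \cdots + u_k \bmod V_1$ separates every pair of adjacent vertices of $G = \Box_{i=1}^k G_i$. Adjacent vertices differ in exactly one coordinate $j$, with $u_j \neq v_j$ and both in $\{0,1,\ldots,V_j-1\}$. Since the factors are ordered so that $V_j \leq V_1$, we have $|u_j - v_j| < V_1$, hence $u_j \not\equiv v_j \pmod{V_1}$ and the two vertices are routed to distinct rows of $C_1$. Combined with the first step, this shows that rows of $C$ indexed by adjacent vertices are qualitatively independent, so $C$ is a valid $CA(ug^2, G, g)$.

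Third, I would derive the approximation ratio. The algorithm outputs an array of size $ug^2 = \lceil \log_s V_1 \rceil \cdot g^2$. Because $G$ is connected with $k > 1$ nontrivial prime factors it contains an edge, so $CAN(G,g) \geq CAN(K_2, g) = g^2$. Each prime factor is nontrivial, hence $V_i \geq 2$ for all $i$, giving $V = \prod_{i=1}^{k} V_i \geq V_1 \cdot 2^{k-1}$ and $V_1 \leq V/2^{k-1}$. Therefore
\begin{equation*}
\rho(V) \;=\; \frac{\lceil \log_s V_1 \rceil \cdot g^2}{CAN(G,g)} \;\leq\; \lceil \log_s V_1 \rceil \;\leq\; \left\lceil \log_s \frac{V}{2^{k-1}} \right\rceil.
\end{equation*}
Polynomial running time follows from the $O(E \log V)$ prime factorization cited in Section~\ref{productgraph}, the standard polynomial construction of $OA(s,g)$, and the obvious cost of the remaining steps. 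The most delicate point is the second step: the construction relies on the ordering $V_1 \geq V_j$ for all $j$, which guarantees $|u_j - v_j| < V_1$; without this ordering the hash could collide on an adjacent pair and the correctness argument would fail.
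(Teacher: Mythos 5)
Your proposal is correct and follows essentially the same route as the paper: show $C_1$ is a strength-two covering array on $V_1$ rows via the differing block of the concatenation, show the modular sum routes adjacent vertices to distinct rows of $C_1$, and bound the ratio by $CAN(G,g)\geq g^2$ together with $V_1\leq V/2^{k-1}$. Your second step is in fact slightly more careful than the paper's, which merely asserts $u_1+\cdots+u_k\not\equiv v_1+\cdots+v_k \pmod{V_1}$ without noting that this hinges on labelling each $V(G_j)$ by $\{0,\dots,V_j-1\}$ and on the ordering $V_j\leq V_1$ forcing $0<|u_j-v_j|<V_1$.
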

\begin{proof}
\textbf{Correctness:} The verification that $C$ is a $CA(ug^2,G,g)$  is straightforward.  First, we show that $C_1$ is a covering array of strength two with $ |V_1|$ parameters. 
Pick any two distinct rows of $C_1$ and consider the sub matrix induced by these two rows. In the sub matrix, there must be a column $(R_i, R_j)^T$ where $i \neq j$. 
Hence each ordered pair of values appears at least  once. 
 Now to show that $C$ is a covering array on $G$, it is sufficient to show that the rows in $C$ for any pair of adjacent vertices $u=(u_1,u_2,\ldots,u_k)$ and $v=(v_1,v_2,\ldots,v_k)$ in $G$ will be qualitatively 
 independent.    We know $u$ and $v$ are adjacent if and only if  $(a_i,b_i)\in E(G_i)$ for exactly one  index $1\leq i\leq k$ and 
 $a_j=b_j$ for $j\neq i$. 
  Hence $ u_1+u_2+ \ldots+u_k   \neq  v_1+v_2+\ldots+v_k ~ \mbox{mod }  V_1$ and  in Step 6, 
 two distinct rows from $C_1$ are assigned  to the vertices $u$ and $v$.\\
 \textbf{Complexity :} The average order of $l$ in Step 1 is $\ln\ln g$ \cite{Riesel}. Thus, the time to find $s$ in Step 1 is $O(\ln \ln g)$.  
 The time to factorize graph $G=(V,E)$ in Step 2 is $O(E \log V)$. In Step 3(1), the smallest positive integer $u$ can be found in 
 $O(\mbox{log}_s V_1)$ time. In Step 3(2), forming one row vector requires $\mbox{log}_sV_1$ assignments; hence, forming $V_1$ row vectors require $O(V_1\mbox{log}V_1)$ time. 
 Thus the total running time of APPROX $CA(G,g)$ is  $O(E \log V+\ln \ln g)$. Observing that, in practice, $\ln \ln g \leq E \log V$, we can restate the running time of 
 APPROX $CA(G,g)$ as $O(E \log V)$.  \\
 \textbf{Approximation ratio:}  We show that APPROX $CA(G,g)$ returns a covering array that is at most $\rho(V)$ times the size of an optimal covering array on $G$. 
 We know the smallest $n$ for which a $CA(n,G,g)$ exists is $g^2$, that is, $CAN(G,g)\geq g^2$. The algorithm returns a covering array on $G$ of size $ug^2$ where
 $$u=\lceil \log_s V_1\rceil.$$ As $G$ has $k$ prime factors, the maximum number of vertices in a factor can be $\frac{V}{2^{k-1}}$, that is, $V_1\leq \frac{V}{2^{k-1}}$.  
Hence $$u= \lceil \log_s V_1\rceil \leq \lceil \log_s \frac{V}{2^{k-1}}\rceil.$$ By relating to the size of the covering array returned to the optimal size, we obtain our approximation ratio 
$$\rho(V)\leq \lceil \log_s \frac{V}{2^{k-1}}\rceil.$$  \end{proof}

\section{Conclusions}  One motivation for introducing  a graph structure was to optimise covering arrays for their use in testing software and networks based on internal structure. Our primary 
concern in this paper is with constructions that make optimal covering arrays on large graphs from smaller ones. Large graphs are obtained by considering either the Cartesian, the direct, the strong, or the Lexicographic product of small graphs.  Using graph homomorphisms, we have 
$$\max_{i=1,2}\{CAN(G_i,g)\}\leq CAN(G_1\Box G_2,g)\leq CAN( \max_{i=1,2}\{\chi(G_i)\},g).$$ We gave several classes of Cayley graphs where the lower bound on covering array number $CAN(G_1\Box G_2)$ is achieved. It is an interesting problem to find out other classes of graphs for which lower bound on covering array number of product graph can be achieved. We gave an approximation algorithm 
for construction of covering array on a graph $G$ having more than one factor with respect to the Cartesian product. Clearly, another area to explore is to consider in details the other graph products, that is, the direct, the strong, and the Lexicographic product.

\end{document}